 \newtheorem{theorem}{Theorem}[section]
 \newtheorem{lemma}[theorem]{Lemma}
 \newtheorem{proposition}[theorem]{Proposition}
 \newtheorem{corollary}[theorem]{Corollary}
\newcommand{\upchi}{\raise1pt\hbox{$\chi$}}
\newcommand{\R}{{\mathord{\mathbb R}}}
\newcommand{\N}{{\mathord{\mathbb N}}}
\renewcommand{\|}{{\Vert}}
\numberwithin{equation}{section} \pagestyle{myheadings} \sloppy
\begin{document}
\title{Optimal Concentration for $SU(1,1)$ Coherent State Transforms and An Analogue of the Lieb-Wehrl Conjecture for $SU(1,1)$}

 \author{
 Jogia Bandyopadhyay\footnote{E-mail: gtg110n@mail.gatech.edu}
 \\
 {\footnotesize Department of Physics, Georgia Institute of Technology, Atlanta, GA 30332, USA}
 }

\vspace{2pt}

\date{}

\maketitle
\begin{abstract}
We derive a lower bound for the Wehrl entropy in the setting of
$SU(1,1)$. For asymptotically high values of the quantum number $k$,
this bound coincides with the analogue of the Lieb-Wehrl conjecture
for $SU(1,1)$ coherent states. The bound on the entropy is proved
via a sharp norm bound. The norm bound is deduced by using an
interesting identity for Fisher information of $SU(1,1)$ coherent
state transforms on the hyperbolic plane $\mathbb{H}^{2}$ and a new
family of sharp Sobolev inequalities on $\mathbb{H}^{2}$. To prove
the sharpness of our Sobolev inequality, we need to first prove a
uniqueness theorem for solutions of a semi-linear Poisson equation
(which is actually the Euler-Lagrange equation for the variational
problem associated with our sharp Sobolev inequality) on
$\mathbb{H}^{2}$. Uniqueness theorems proved for similar semi-linear
equations in the past do not apply here and the new features of our
proof are of independent interest, as are some of the consequences
we derive from the new family of Sobolev inequalities.
\end{abstract}

\renewcommand{\thefootnote}{}
\footnote{Work partially supported by U.S. National Science
Foundation grant DMS 06-00037}

 \section{Introduction}

 Let $M$ be a  Riemannian manifold with volume element  $ {\rm d}\mathcal{M}$.
 For a probability density $\rho$ on $M$, that is, a non negative measurable
 function on $M$ with $\int_M\rho d\mathcal{M} =1$,
 its  entropy is defined as:
 \begin{eqnarray}\label{1.1}
 S(\rho)= -\int_M  \rho \ln\rho \  {\rm d}\mathcal{M}
  \end{eqnarray}

 Thus defined, the entropy of a density $\rho$ can be thought of as a measure of
 its ``concentration''.   If some part of the mass of $\rho$ is very nearly concentrated in
 a multiple of a Dirac mass, then $S(\rho)$ will be very negative.
 We shall be mainly interested in the case in which $M$ is the phase space of
 some classical system, so that, in particular, $M$ is a symplectic manifold.
 In that case,  we shall refer to $\rho$ as a {\it classical density}, and $S(\rho)$
 as its {\it classical entropy}.

 The uncertainty principle limits the extent of possible concentration in phase space:
 For instance it prevents both the momentum variables $p$ and the configuration
 variables $q$ from taking on well-defined values at the same time.
 Indeed, a quantum mechanical density $\rho_{Q}$ is a non negative operator on
the Hilbert space $ \mathcal{H}$, which is the state space of the
quantum system, having unit trace. Then the quantum entropy (or von
Neuman entropy) of  $\rho^{Q}$ is defined by
 \begin{eqnarray}\label{1.2}
 S^{Q}(\rho^{Q})=-\mbox{Tr }\rho^{Q}\ln\rho^{Q}\ .
 \end{eqnarray}
Since all of the eigenvalues of $\rho^Q$ lie in the interval
$[0,1]$, it is clear that
\begin{eqnarray}\label{qb}
S^Q(\rho^Q) \ge 0\ .
 \end{eqnarray}

As Wehrl emphasized [Weh], when one considers a quantum system and
its corresponding classical analogue, not all of the classical
probability densities on the phase space $M$ can correspond to
physical densities for the quantum system, and  one might expect a
lower bound on the classical entropy of those probability densities
that do correspond to actual quantum states.

There is a natural way to make the correspondence between quantum
states and classical probability densities on phase space, which
goes back to Schr\"odinger. It is based on the {\it coherent state
transform}, which is an isometry $\mathcal{L}$ from the quantum
state space $\mathcal{H}$ {\it into} $L^2(M)$, the Hilbert space of
square integrable functions on the classical phase space $M$.  Since
it is an isometry, if $\psi$ is any unit vector in $\mathcal{H}$,
$$\rho_\psi = |\mathcal{L}\psi|^2$$
is a probability density on $M$. Wehrl [Weh] proposed defining the
classical entropy of a quantum state $\phi$ in this way (note the
the corresponding density matrix is rank one, and hence the von
Neuman entropy would be zero). The Wehrl entropy is defined in terms
of the coherent states
 for the quantum system and is bounded below by the quantum entropy.
 It has several physically desirable features such as monotonicity,
 strong subadditivity, and of course, positivity.

Wehrl identified the class of probability densities arising through
the coherent state transform as the class of {\it quantum
mechanically significant} probability densities on $M$, and
conjectured that corresponding to (\ref{qb}), there should be a
lower bound on $S( |\mathcal{L}\psi|^2)$ as $\psi$ ranges over the
unit sphere in $\mathcal{H}$.

Specifically, when $\mathcal{H}$ is $L^2(\R,{\rm d}x)$, so that the
classical phase space is $\R^2$ with its usual symplectic and
Riemannian structure, Wehrl conjectured that the lower bound on  $S(
|\mathcal{L}\psi|^2)$ is attained when $\psi$ is a minimal
uncertainty state $\psi_{\rm min}$, also known as a {\it Glauber
coherent state}. That is:
\begin{eqnarray}\label{cslb}
\inf_{\|\psi\|_\mathcal{H} =1}S( |\mathcal{L}\psi|^2) = S(
|\mathcal{L}\psi|^2_0)\ .
\end{eqnarray}
This was proved by Lieb [Lie] . There is a natural analogue of the
Wehrl cojecture for other state spaces, and other coherent state
transforms. Lieb generalized the Wehrl conjecture
 to the $SU(2)$ coherent
 states, for which the corresponding  classical phase space is  $S^2$,
 the two-dimensional sphere, with its usual Riemannian and symplectic structure.
 The analogues of the Glauber coherent states in this case are the
 {\it Bloch coherent states} generated by least weight vectors in the various
 unitary representations of $SU(2)$, and Lieb conjectured the analogue of  (\ref{cslb})
 for the $SU(2)$ coherent state transform.

 Although  Lieb's conjecture for $SU(2)$  is still open, it has attracted the attention of a number of researchers,
 and much progress has been made.  The various unitary representations of $SU(2)$ are indexed by
 a half integer $j$, which is the {\it quantum number} in this context; for each such $j$ there is a coherent state
 transform, and hence a conjectured lower bound of the Wehrl entropy. The bound is
 trivial for $j=1/2$, in which case every state is a Bloch coherent state, but is is already
 non trivial for $j=1$. Schupp  [Sch] proved the conjecture for
 $j=1$ and $j=3/2$.
 Later Bodmann [Bod] proved a result which may be
 seen as complementary to Schupp's result; he deduced a lower bound for the Wehrl
 entropy of $SU(2)$ coherent states, for which the high spin
 asymptotics coincided with Lieb's conjecture up to, but not
 including, terms of first and higher orders in the inverse of spin
 quantum number $j$.

 Bodmann did this by proving a sharp $L^p$ bound on the range of the coherent state transform.
 This led to a proof of an analogue of Lieb's conjecture for certain {\it Renyi entropies}:
 For any $p>1$ and any classical density $\rho$,  define
 \begin{eqnarray}\label{renp}
S_p(\rho) = \frac{1}{p-1}\ln\left(\|\rho\|_p\right)\ .
\end{eqnarray}
where $\|\rho\|_p$ is the $L^p$ norm of $\rho$.  Then it is easy to
see that
$$
\lim_{p\to 1}S_p(\rho) = S(\rho)\ .$$ Bodmann derived his bound on
Renyi entropies from a Sobolev type inequality and a Fisher
information identity, which is another type of concentration bound
on the range of the coherent state transform. The Fisher information
$I(\rho)$ of a probabilty density $\rho$ on $M$ is defined by
$$I(\rho) = \int_M |\nabla \ln\rho |^2\rho\  {\rm d}\mathcal{M} = 4  \int_M |\nabla \sqrt{\rho} |^2\ {\rm d}\mathcal{M}\ .$$
For the Glauber coherent state transform, Carlen [Car]had proved
that all classical densities on $\R^2$ arising through the coherent
state transform had the {\it same} finite value of the Fisher
information. He then used that together with the logarithmic Sobolev
inequality (cf. [Gro]) to give a new proof of Wehrl's conjecture,
and to show that the lower bound in (\ref{cslb}) is attained only
for Glauber coherent states. Bodmann proved an analogue of Carlen's
result for Fisher information, and used this, together with a sharp
Sobolev inequality, instead of the sharp logarithmic Sobolev
inequality, to obtain his Renyi information bounds.

In this paper, we investigate the analogue of the Lieb-Wehrl
conjecture for $SU(1,1)$. The representations of $SU(1,1)$ belonging
to a discrete series, are labeled by a half-integer $k$, the
relevant {\it quantum number} in this context. While the classical
phase space for SU(2) is the sphere $S^2$, for $SU(1,1)$ the
classical phase space is $H^2$, the hyperbolic plane. It is natural
to conjecture that, here too, the coherent states generated by the
least-weight vector of the representation provide a lower bound on
the entropy, as in Lieb's conjecture for $SU(2)$. We prove that this
is indeed asymptotically true, in the semi-classical limit.   We
also prove that this is exactly true if one replaces the entropy by
an appropriate Renyi entropy. To obtain these results, we prove a
number of theorems concerning  analysis in $H^2$ that are of
independent interest. Specifically, we prove a new sharp Sobolev
inequality, and a sharpened energy--entropy inequality in $H^2$. The
Sobolev inequality is
 \begin{eqnarray*}
\| f\|_{q}^{q} +
\frac{4}{kq(kq-2)}\int|\nabla|f|^{q/2}|^{2}\geq\left(\frac{2k-1}{kq-1}\right)\left(\frac{kp-1}{2k-1}\right)^{q/p}{\left(\frac{kq-1}{kq-2}\right)\|f\|_{p}^{q}}
\end{eqnarray*}
 where $p=q+1/k$, $q\geq 2$, $kq> 2$,
 and we determine all of the cases of equality.

To prove the sharpness of our Sobolev inequality we need to prove
and use a uniqueness result for radial solutions of a semi-linear
Poisson equation on the hyperbolic plane.  The nature of this
equation on $H^2$ is substantially different from that of similar
equations which have been investigated in the past. The methods
developed here may well be useful for other uniqueness problems.

We then prove the following Fisher information identity:
\begin{eqnarray*}
 \int|\nabla|\mathcal{L}\psi|^{q/2}|^{2}=\frac{1}{4}kq\int|\mathcal{L}\psi|^{q}
 \end{eqnarray*}
 where  $q$ is a positive number such that $kq>2$. As mentioned above,
 an identity like
 this was first proved by Carlen [Car] for coherent state transforms associated with the Glauber coherent
 states.

The sharp Sobolev inequality and the Fisher information identity
allow us to prove an $L^p$ norm estimate {\it a la} Bodmann. This
norm estimate is used to deduce a lower bound for the Wehrl entropy
of coherent state transforms via a convexity argument, and the
result is:

\begin{eqnarray*}
S(|\mathcal{L}\psi(\zeta)|^{2})\geq
2k\ln\left(1+\frac{1}{2k-1}\right)
\end{eqnarray*}
It is seen that for high values (this gives us the semi-classical
limit) of the quantum number $k$, this lower bound coincides with
the analogue of the Lieb-Wehrl conjecture, up to but not including
terms of first and higher order in $\displaystyle k^{-1}$.

The methods used to bound the entropy also serve to produce a new,
sharpened energy--entropy inequality for functions on $H^2$. An
energy--entropy inequality is an inequality of the form
\begin{eqnarray}\label{enen}
-S(\rho) \le \Phi_{M}(I(\rho))\
\end{eqnarray}
for some function $\Phi$. Since the Fisher information, can be
expressed in terms of an energy integral as shown above, the
entropy-energy terminology is natural. For a given Riemannian
manifold $M$, the entropy--energy problem is to determine the least
function $\Phi: \R_+ \to \R$ for which (\ref{enen}) is true.

For example, in the case $M = \R^2$, the optimal $\Phi$ is known:
\begin{eqnarray*}
-S(\rho) \leq \ln\left(\frac{4}{\pi e}I(\rho)\right)
\end{eqnarray*}
Equality is achieved when $\rho$ is an isotropic Gaussian function.
For an appropriate choice of the variance of the Gaussian, $I(\rho)$
can take on any value, and this inequality is sharp for all values
of $I(\rho)$.  That is,
$$\Phi_{\R^2}(t) = \ln\left(\frac{4}{\pi e}t\right)\ .$$

There has been much investigation of entropy-energy inequalities for
various Riemannian manifolds (see [Bec2], [Heb], [Rot] for example).
Though there has been significant progress, many questions are still
open.

 In the case of $\mathbb{H}^{2}$ , Beckner
proved [Bec2] that the entropy--energy inequality for $H^2$ holds
with the same $\Phi$  as in $\R^2$. That is,
$$\Phi_{H^2}(t) \le \Phi_{\R^2}(t)$$
for all $t$.

This result is asymptotically sharp in the sense that
$$\lim_{t\to 0} \frac{\Phi_{H^2}(t)}{ \Phi_{\R^2}(t)} =1\ ,$$
however,  $\Phi_{H^2}(t) < \Phi_{\R^2}(t)$. We shall give sharpened
estimates on $\Phi_{H^2}(t)$.

The paper is organized as follows: in Section 2 we give a
description of a discrete representation of $SU(1,1)$. We then
define the associated coherent states and coherent state transform.
Given any quantum state $\psi$, we denote its coherent state
transform by $\mathcal{L}\psi(\zeta)$, where the complex number
$\zeta$ is used to label the coherent states. We show that these
coherent state transforms are actually probability densities on the
hyperbolic plane. We also state the analogue of the Lieb-Wehrl
conjecture in this setting.

Section 3 contains the proof of the lower bound for the Wehrl
entropy for SU(1,1), and the results leading up to it.
 Here we prove
Fisher information identity for the coherent state transforms, and
the sharp Sobolev inequality. The proof of the latter result uses
the uniqueness result that is postponed to the final section.

Section 4 contains the sharpened entropy--energy inequality for
$H^2$, and finally Section 5,  the longest one, contains our
uniqueness proof.

The problem of proving an analogue of the Lieb-Wehrl conjecture in
the $SU(1,1)$ setting  was suggested to me by my advisor, Prof. Eric
Carlen. I am greatly indebted to him for introducing me to this
beautiful problem and helping me with many valuable suggestions and
discussions without which this work would not have been possible.

\setcounter{equation}{0}

\section{Representation of the group SU(1,1)
and the Construction of Coherent States} \vspace{.2in}

The group $SU(1,1)$ consists of  unimodular $2\times 2$ matrices
which leave the Hermitian form $|z_{1}|^{2}-|z_{2}|^{2}$ invariant.
These matrices can be parametrized by a pair of complex numbers,
$\alpha,\beta$ as follows:
\begin{eqnarray*}
g= \left( \begin{array}{ccc} \alpha &
\beta\\
\bar{\beta} & \bar{\alpha}
\end{array} \right), \qquad |\alpha|^{2}-|\beta|^{2}=1
\end{eqnarray*}
One can define a new variable $\displaystyle z=\frac{z_{2}}{z_{1}}$
and describe the action of the element $g\in SU(1,1)$ on
$\mathbb{C}^{1}$ as:
\begin{eqnarray*}
z\longrightarrow z_{g}=\frac{\alpha z+\bar{\beta}}{\beta
z+\bar{\alpha}}
\end{eqnarray*}

However, the group action on $\mathbb{C}^{1}$ is not transitive; in
fact the complex plane is foliated into three orbits, namely, i) the
interior of the unit disk, ii) the boundary of the unit disk, and,
iii) the complement of the closed unit disk in the complex plane.

It is easy to see [Per] that the set of elements of $SU(1,1)$ having
real, positive diagonal entries can be identified with the interior
of the unit disc, $\{\zeta:|\zeta|<1\}$. We would work with one of
the two discrete series of representations of $SU(1,1)$, in the
space of functions that are defined and analytical in the unit disc.

The Lie algebra for $SU(1,1)$ has three generators as its basis
elements, which we call $K_{0}, K_{1}$ and $K_{2}$ following
Perelomov. The commutation relations satisfied are:
\begin{eqnarray*}
[K_{1},K_{2}]=-iK_{0},\qquad [K_{2},K_{0}]=iK_{1},\qquad
[K_{0},K_{1}]=iK_{2}
\end{eqnarray*}
There is one Casimir operator given by:
$\hat{C}=K_{0}^{2}-K_{1}^{2}-K_{2}^{2}$. So for any irreducible
representation the operator is a multiple of the identity and we
write:
\begin{eqnarray*}
\hat{C}= k(k-1)\hat{I}
\end{eqnarray*}
 Thus a particular representation of $SU(1,1)$ is labeled by a
 single number $k$. For the discrete series this number takes on
 discrete half-integral values, $k=1/2,1,3/2,...$  [Bar].
 Let us call a particular representation $T^{k}(g)$. We choose the
 simultaneous eigenvectors of the Casimir operator $\hat{C}$ and $K_{0}$
 to be the basis vectors. We use Dirac's bra-ket notation and denote these vectors by $|k,\mu\rangle$
 where:
 \begin{eqnarray*}
 K_{0}|k,\mu\rangle=\mu|k,\mu\rangle
 \end{eqnarray*}
 Here $\mu=k+m$ and $m$ is either zero or any positive integer
 [Per] (the representations are infinite-dimensional).
 We now look at a realization of $T^{k}(g)$ in the space $\mathcal{G}_{k}$ of
 functions $f(z)$ which are analytic inside the unit circle and which satisfy the condition:
 \begin{eqnarray*}
\frac{2k-1}{\pi}\int_{D}|f(z)|^{2}(1-|z|^{2})^{2k-2}d^{2}z<\infty,\qquad
 D=\{z:|z|<1\}
 \end{eqnarray*}

 The invariant density for this realization of $T^{k}(g)$ is [Bar]:
 \begin{eqnarray*}
 d\varpi_{k}(z)=\frac{2k-1}{\pi}(1-|z|^{2})^{2k-2}d^{2}z
 \end{eqnarray*}
 The pre-factor $\displaystyle\frac{2k-1}{\pi}$ is chosen so that we have
 $\displaystyle(f,g)_{k}=\int_{D}\overline{f(z)}g(z)d\varpi_{k}(z)\equiv1$ when $f\equiv 1$ and $g\equiv 1$, where
 $(f,g)_{k}$ denotes the inner product of $f$ and $g$ in this
 representation. The group action on $\mathcal{G}_{k}$ in
 the multiplier representation $T^{k}(g)$ is given by [Bar]:
 \begin{eqnarray*}
 T^{k}(g)f(z)= (\beta z+\bar{\alpha})^{-2k}f(z_{g}), \qquad z_{g}=\frac{\alpha z+\bar{\beta}}{\beta
z+\bar{\alpha}}
\end{eqnarray*}

  The operators $T^{k}(g)$ with the group action
 defined as above furnish a unitary and irreducible representation of $SU(1,1)$ [Bar].
 Now, in $\mathcal{G}_{k}$ the generators act as first order
 differential operators. If, in stead of  the standard basis $K_{0}, K_{1}$ and $
 K_{2}$
 we switch to the ladder operators $K_{\pm}=\pm i(K_{1}\pm iK_{2})$
 and $K_{0}$, then we have [Per]:
 \begin{eqnarray*}
 K_{+}= z^{2}\frac{d}{dz}+2kz\quad,\qquad K_{-}=\frac{d}{dz}\quad,\qquad K_{0}=z\frac{d}{dz}+k
 \end{eqnarray*}
 From the form of $K_{0}$ it is clear that its eigenfunctions in
 this
 representation are monomials in $z$. Normalized with respect to
 the measure $d\varpi_{k}(z)$ these eigenvectors are written:
 \begin{eqnarray}\label{2.1}
 |k,k+m\rangle =
 \left(\frac{\Gamma(m+2k)}{m!\Gamma(2k)}\right)^{\frac{1}{2}}z^{m}
 \end{eqnarray}
To construct the coherent states let us choose the least-weight
vector $|k,k\rangle$ in $\mathcal{F}_{k}$. The stationary subgroup
for this state is the subgroup $H$ of diagonal matrices of the form

$\displaystyle h= \left( \begin{array}{ccc} e^{i\varphi/2} &
0\\
0 & e^{-i\varphi/2}
\end{array} \right)$. The factor space $G/H$ is realized as the unit
disk $\{\zeta: |\zeta|<1\}$, or equivalently, as the hyperbolic
plane $\mathbb{H}^{2}=\{\mathbf{n}:
|n|^{2}=n_{0}^{2}-n_{1}^{2}-n_{2}^{2}=1, n_{0}>0\}$ via the
following correspondence:
\begin{eqnarray*}
n_{0}=\cosh\frac{\tau}{2}\quad,\qquad
n_{1}=\sinh\frac{\tau}{2}\cos\phi\quad,\qquad
n_{2}=\sinh\frac{\tau}{2}\sin\phi\qquad\mbox{and   }\quad
\zeta=\tanh\frac{\tau}{2}e^{i\phi}
\end{eqnarray*}
An element of $G/H$  determines a hyperbolic rotation and we can
decompose the corresponding operator $T^{k}(g_{\mathbf{n}})$ as
follows:
\begin{eqnarray*}
T^{k}(g_{\mathbf{n}})=\exp
\left(\tanh\frac{\tau}{2}\exp(i\phi)K_{+}\right)\exp\left(-2\ln(\cosh\frac{\tau}{2})K_{0}\right)\exp\left(-\tanh\frac{\tau}{2}\exp(-i\phi)K_{-}\right)
\end{eqnarray*}
We let these operators act on the chosen least-weight vector
$|k,k\rangle$ to obtain an expression for the coherent states in
terms of the standard orthonormal basis vectors:
\begin{eqnarray*}
T^{k}(g_{\mathbf{n}})|k,k\rangle &=& \exp
\left(\tanh\frac{\tau}{2}\exp(i\phi)K_{+}\right)\exp\left(-2\ln(\cosh\frac{\tau}{2})K_{0}\right)\exp\left(-\tanh\frac{\tau}{2}\exp(-i\phi)K_{-}\right)|k,k\rangle\\
&=&(1-|\zeta|^{2})^{k}\sum_{m=0}^{\infty}\left(\frac{\Gamma(m+2k)}{m!\Gamma(2k)}\right)^{\frac{1}{2}}\zeta^{m}|k,m\rangle
\end{eqnarray*}
Represented as above, the coherent states are parametrized by a
complex number $\zeta$ on the unit disk or equivalently, by two real
parameters $\tau$ and $\phi$ on the hyperbolic $\mathbb{H}^{2}$. In
what follows, we will denote the coherent state corresponding to a
particular $\zeta$ by $|\zeta\rangle$. If we now choose any
arbitrary normalized vector
$|\psi\rangle=\sum_{m=0}^{\infty}a_{m}|k,m\rangle$, then we can
define its coherent state transform $\mathcal{L}\psi(\zeta)$ via the
following inner product:
\begin{eqnarray}\label{2.2}
\mathcal{L}\psi(\zeta)=\langle\psi|\zeta\rangle=(1-|\zeta|^{2})^{k}\sum_{m=0}^{\infty}\left(\frac{\Gamma(m+2k)}{m!\Gamma(2k)}\right)^{\frac{1}{2}}\bar{a}_{m}\zeta^{m}
\end{eqnarray}
Evidently $\mathcal{L}\psi(\zeta)$ is a function on the unit disk
and so the coherent state transform maps unit vectors in our
representation space $\mathcal{G}_{k}$ into functions on the unit
disk, which vanish at the boundary of the disk. This mapping becomes
an isometry if we equip the unit disk with the $L^{2}$-metric
corresponding to the measure:
$d\nu(\zeta)=\displaystyle\left(\frac{2k-1}{\pi}\frac{1}{(1-|\zeta|^{2})^{2}}\right)d^{2}\zeta$.
Note that $d\nu(\zeta)$ is just
$\displaystyle\left(\frac{2k-1}{4\pi}\right)$ times the standard
measure on the unit disk, that is, the measure $\displaystyle
d\mu(\zeta)=\left(\frac{4}{(1-|\zeta|^{2})^{2}}\right)d^{2}\zeta$,
obtained from the  Poincare metric on the disk. With inner product
defined in the usual way with respect to the measure $d\nu(\zeta)$,
the space of the coherent state transforms described above is a
Hilbert space [Bar]. We call this space $\mathfrak{F}_{k}$. The
transform $\mathcal{L}$ is thus an analogue of the Bargmann-Segal
transform for the Glauber coherent states based on the Heisenberg
group. Since $|\psi\rangle$ is a unit vector in our representation
space $\mathcal{G}_{k}$, its coherent state transform $\displaystyle
|\mathcal{L}\psi(\zeta)|^{2}d\nu(\zeta)$ is a probability density on
the unit disk. Thus, $\mathfrak{F}_{k}$ is a space of probability
densities on the unit disk. We can calculate the Wehrl entropy
$S(|\mathcal{L}\psi(\zeta)|^{2})$ associated with the coherent state
transform $\mathcal{L}\psi(\zeta)$. If the unit vector
$|\psi\rangle$ happens to be a coherent state itself, we find that:
$\displaystyle S(|\mathcal{L}\psi(\zeta)|^{2})=\frac{2k}{2k-1}$. The
analogue of the Lieb-Wehrl conjecture for $SU(1,1)$ coherent states
would then be:
\begin{proposition}
For all $\mathcal{L}\psi(\zeta)\in\mathfrak{F}_{k}$, the Wehrl
entropy is bounded below by:
\begin{eqnarray}\label{2.3}
S(|\mathcal{L}\psi(\zeta)|^{2})\geq \frac{2k}{2k-1}
\end{eqnarray}
\end{proposition}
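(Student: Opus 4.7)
The plan is to adapt Carlen's strategy for the Glauber (Heisenberg) case to the hyperbolic setting, combining the Fisher information identity promised earlier in the paper with a sharp entropy--energy inequality tailored to the coherent state range $\mathfrak{F}_{k}$. Specializing the Fisher identity to $q=2$ gives
\[
I(|\mathcal{L}\psi|^{2}) \;=\; 4\int \bigl|\nabla |\mathcal{L}\psi|\bigr|^{2}\,d\nu \;=\; 2k,
\]
so every element of $\mathfrak{F}_{k}$ realizes the \emph{same} Fisher information. The target bound $\tfrac{2k}{2k-1}$ must therefore follow from an inequality of the form $-S(\rho)\le \Phi(I(\rho))$ that is sharp at the value $I=2k$ and achieves equality on the basic coherent state density $\rho_{0}(\zeta)=(1-|\zeta|^{2})^{2k}$, which by direct calculation gives $-S(\rho_{0}) = -\tfrac{2k}{2k-1}$.

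Concretely, I would write an arbitrary element of $\mathfrak{F}_{k}$ in the factored form $\mathcal{L}\psi(\zeta)=(1-|\zeta|^{2})^{k} F(\zeta)$ with $F$ holomorphic on the disc, and split
\[
-S(|\mathcal{L}\psi|^{2}) \;=\; -2k\!\int |\mathcal{L}\psi|^{2}\log(1-|\zeta|^{2})\,d\nu \;-\; \int |\mathcal{L}\psi|^{2}\log|F|^{2}\,d\nu.
\]
The first term is controlled by a sharp weighted log-integral estimate whose extremizer is the radial density $\rho_{0}$; the second term is where the holomorphic rigidity of $\mathfrak{F}_{k}$ must enter, since $\log|F|^{2}$ is subharmonic and vanishes identically in the coherent state case. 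To couple these pieces to the Fisher identity, I would follow the Gross-type derivation of log-Sobolev from Sobolev: establish a sharp $L^{p}$ bound on $\mathfrak{F}_{k}$ that holds for every $p$ in a right neighborhood of $1$, with the basic coherent state and its $SU(1,1)$ translates saturating it at every such $p$; differentiating this family at $p=1$ then yields the Wehrl inequality with the exact constant $\tfrac{2k}{2k-1}$, and the equality transfers to coherent states.

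The main obstacle is precisely the extension of the paper's sharp Sobolev inequality --- which is stated only along the one-parameter curve $p=q+1/k$ --- to a genuine $(p,q)$-family sharp in a neighborhood of $(1,1)$. The bound that comes out of the single sharp inequality plus convexity is $2k\log(1+\tfrac{1}{2k-1})$, which agrees with $\tfrac{2k}{2k-1}$ only up to $O(k^{-2})$ corrections; closing this gap requires exploiting the full Bergman reproducing-kernel structure of $\mathfrak{F}_{k}$ (the coherent-state overcompleteness), not just its isometric embedding in $L^{2}(d\nu)$. A complementary route, which I would run in parallel, is a Berezin-symbol / heat-flow argument in the spirit of Lieb's attack on the $SU(2)$ conjecture: express the Wehrl entropy as an average of the log lower symbol, apply Jensen via the contractive Berezin transform on $\mathbb{H}^{2}$ to dominate it by the von Neumann entropy of the rank-one density matrix $|\psi\rangle\langle\psi|$, and use that the latter is zero. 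The main difficulty on this route is that the Berezin transform on $\mathbb{H}^{2}$ lacks the $2$-positivity properties that made Lieb's SU(2) strategy tractable, so even proving a one-sided contraction under the relative entropy would be the decisive step.
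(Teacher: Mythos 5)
There is a gap here, but it is essential to name its nature precisely: the statement you were asked to prove is not proved in the paper at all. Proposition 2.1 is the paper's \emph{formulation of the conjectured} $SU(1,1)$ analogue of the Lieb--Wehrl conjecture (it is introduced with ``would then be''), and what the paper actually establishes is only the strictly weaker Theorem 3.4, namely $S(|\mathcal{L}\psi|^{2})\geq 2k\ln\left(1+\frac{1}{2k-1}\right)$, which lies below $\frac{2k}{2k-1}$ for every finite $k$ (since $\ln(1+x)<x$ for $x>0$) and matches the conjectured constant only asymptotically as $k\to\infty$. Your proposal faithfully reconstructs the paper's genuine machinery: the Fisher information identity at $q=2$ (note this requires $kq>2$, i.e.\ $k>1$, for the surface terms to vanish), giving the common value $I=2k$ on all of $\mathfrak{F}_{k}$; the sharp Sobolev inequality along the one-parameter curve $p=q+1/k$; and the logarithmic convexity of $p\mapsto\ln\|f\|_{p}^{p}$. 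Assembling these exactly as you indicate \emph{is} the paper's proof of Theorem 3.4, and your computation that this route yields $2k\ln\left(1+\frac{1}{2k-1}\right)$ rather than $\frac{2k}{2k-1}$ is correct.

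The gap in your attempt is therefore the same gap as in the paper itself: neither of your two escalation routes is carried out, and each stalls exactly where you say it does. The sharp-$L^{p}$-family route would need sharpness of a two-parameter $(p,q)$ family off the curve $p=q+1/k$, with coherent states extremal at every point near the differentiation endpoint, and the paper's uniqueness theorem for the Euler--Lagrange equation in Section 5 supplies nothing off that curve. The one-shot Carlen scheme cannot be rescued by any \emph{unrestricted} entropy--energy inequality on $\mathbb{H}^{2}$: unlike the Gaussian in $\R^2$, which simultaneously saturates the sharp log-Sobolev inequality and arises as a coherent-state density, the density $(1-|\zeta|^{2})^{2k}$ is not known (and is unlikely) to extremize the unrestricted problem at energy $E=k/2$ --- indeed, inserting $E=k/2$ into the paper's improved $\Phi^{\star}$ of Section 4 returns precisely the weaker bound (3.4) again, so the holomorphic rigidity of $\mathfrak{F}_{k}$ that you isolate in the subharmonic term $\log|F|^{2}$ is genuinely indispensable and genuinely unexploited. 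In short: your submission is a correct diagnosis of an open problem together with a complete derivation of the weaker bound of Theorem 3.4, not a proof of the Proposition; to the extent a comparison with ``the paper's own proof'' is meaningful, your convexity argument coincides with the paper's argument for its weaker substitute result.
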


\setcounter{equation}{0}

 \section{The Entropy Bound and Related Results}
\vspace{.2in}

In this section we first present a useful Fisher information
identity for functions in $\mathfrak{F}_{k}$, that relates the
$q$-norm (for all positive $q$ such that $kq>2$) of a function to
the $L^{2}$-norm of the associated gradient. We then prove a sharp
Sobolev inequality for functions in a larger function space
$\mathfrak{H}$, defined to be the space of bounded non-constant
functions $f\in W^{1,2}(D)$ on the unit disk which vanish at the
boundary; the norms here are computed with respect to the measure
$d\nu(\zeta)$. Next, we prove a sharp norm estimate for functions in
$\mathfrak{F}_{k}$ (note that $\mathfrak{F}_{k}$ is a subspace of
$\mathfrak{H}$) by converting the gradient norm of $|f|^{q/2}$ that
appears in our sharp Sobolev inequality, into the $L^{q}$-norm of
the function $f$, via the Fisher information identity. This sharp
norm estimate is then used to derive a lower bound on the entropy of
functions in $\mathfrak{F}_{k}$.

 The variational
problem associated with our sharp Sobolev inequality in the function
space $\mathfrak{H}$, naturally leads us to an Euler-Lagrange
equation which is actually a semi-linear Poisson equation on the
unit disk. We reduce the Euler-Lagrange equation to an ordinary
differential equation by using radially symmetric decreasing
rearrangements of functions. To prove the sharpness of the Sobolev
inequality we need to prove that the ground state solution, that is
to say, the solution that decays to zero at the boundary of the
disk, is unique. Since the proof is somewhat involved, we present a
detailed analysis of the Euler-Lagrange equation and relevant
results in section 5.

\subsection{A Fisher Information Identity} The Fisher information of a probability density function is a measure of its
concentration. In this subsection we prove a Fisher information
identity for functions in $\mathfrak{F}_{k}$.
 \begin{theorem}
For $\mathcal{L}\psi(\zeta)$ in
 $\mathfrak{F}_{k}$ the following identity holds:
 \begin{eqnarray*}
 \int|\nabla|\mathcal{L}\psi(\zeta)|^{q/2}|^{2}d\nu(\zeta)=\frac{1}{4}kq\int|\mathcal{L}\psi(\zeta)|^{q}d\nu(\zeta)
 \end{eqnarray*}
 where  $q$ is a positive number such that $kq>2$.
 \end{theorem}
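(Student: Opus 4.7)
The approach rests on the factorization $\mathcal{L}\psi(\zeta) = (1-|\zeta|^{2})^{k} F(\zeta)$, where $F(\zeta) = \sum_{m\geq 0} c_{m}\bar a_{m}\zeta^{m}$ (with $c_{m} = (\Gamma(m+2k)/m!\Gamma(2k))^{1/2}$) is holomorphic on the unit disk, a fact already visible in (\ref{2.2}). Setting $\rho = |\mathcal{L}\psi|^{q} = (1-|\zeta|^{2})^{kq}|F|^{q}$ and $u = \sqrt{\rho} = |\mathcal{L}\psi|^{q/2}$, the identity I need to prove reads $\int|\nabla u|^{2}\,d\nu = \tfrac{kq}{4}\int\rho\,d\nu$, so the task is to extract a multiple of $\rho$ out of $|\nabla u|^{2}$.

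The key observation is that $\log\rho = kq\log(1-|\zeta|^{2}) + q\log|F|$ has an extremely simple hyperbolic Laplacian. Using the conformal identity $\Delta_{\mathrm{hyp}} = \tfrac{(1-|\zeta|^{2})^{2}}{4}\Delta_{\mathrm{Eucl}}$ together with $\Delta_{\mathrm{Eucl}}\log(1-|\zeta|^{2}) = -4(1-|\zeta|^{2})^{-2}$, one gets $\Delta_{\mathrm{hyp}}\log(1-|\zeta|^{2}) = -1$. Since $\log|F|$ is harmonic on the complement of the (isolated) zero set of $F$, this yields $\Delta_{\mathrm{hyp}}\log\rho = -kq$ pointwise off that set. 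I then combine this with the general pointwise identity
\begin{equation*}
4\,|\nabla u|^{2} \;=\; \Delta\rho \;-\; \rho\,\Delta\log\rho,
\end{equation*}
valid for any smooth positive $u=\sqrt{\rho}$ on a Riemannian manifold (derived from $\Delta(u^{2}) = 2u\Delta u + 2|\nabla u|^{2}$ together with $\Delta\log u = \Delta u/u - |\nabla u|^{2}/u^{2}$). Applied with the hyperbolic operators this becomes $4|\nabla u|^{2} = \Delta_{\mathrm{hyp}}\rho + kq\,\rho$; integrating against $d\nu$ over $\mathbb{H}^{2}$ and using $\int\Delta_{\mathrm{hyp}}\rho\,d\nu = 0$ (Green's identity with no boundary contribution at infinity) gives the theorem after dividing by four.

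The hard part will be justifying the two formal steps above: the distributional singularity of $\Delta\log|F|$ at the zeros of $F$, and Green's identity on the noncompact $\mathbb{H}^{2}$. At a zero $z_{0}$ of $F$ of multiplicity $n$, $\Delta\log|F|$ carries a $2\pi n\,\delta_{z_{0}}$ contribution, but $\rho$ vanishes to order $nq>0$ there, so $\rho\,\Delta\log\rho$ remains a bounded locally integrable function and the Dirac mass is harmlessly annihilated. At infinity in the hyperbolic metric, the factor $(1-|\zeta|^{2})^{kq}$ forces exponential decay of $\rho$ in hyperbolic distance, and the hypothesis $kq>2$ is more than sufficient to guarantee that $\int_{\partial B_{R}}\partial_{\nu}\rho\to 0$ as the hyperbolic radius $R\to\infty$. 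I would dispose of these subtleties cleanly by first establishing the identity for $\psi$ a finite linear combination of the basis vectors $|k,k+m\rangle$ (so that $F$ is a polynomial, everything is classically smooth, and the integrands decay arbitrarily fast), then extending to general $\psi\in\mathcal{G}_{k}$ by density and continuity of both sides in the $L^{q}$-topology on $\mathfrak{F}_{k}$.
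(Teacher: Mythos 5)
Your argument is correct, and it reorganizes the computation around a different key lemma than the paper uses. The paper works directly with the product $|\mathcal{L}\psi|^{q/2}=(1-|\zeta|^{2})^{kq/2}|\Phi|$, expands $|\nabla(AB)|^{2}$ term by term in the coordinates $(\tau,\phi)$, verifies by explicit differentiation the identity $(1-|\zeta|^{2})^{-kq}|\nabla(1-|\zeta|^{2})^{kq}|^{2}=\triangle(1-|\zeta|^{2})^{kq}+kq(1-|\zeta|^{2})^{kq}$, and then integrates by parts twice, using the Cauchy--Riemann equations to kill the term $(1-|\zeta|^{2})^{kq}\bigl(|\nabla|\Phi||^{2}-\tfrac{1}{4}\triangle|\Phi|^{2}\bigr)$. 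Your route isolates the single fact $\Delta_{\mathrm{hyp}}\log(1-|\zeta|^{2})=-1$ (hence $\Delta_{\mathrm{hyp}}\log\rho=-kq$ off the zeros of $F$) and feeds it into the universal identity $4|\nabla\sqrt{\rho}|^{2}=\Delta\rho-\rho\,\Delta\log\rho$; the paper's displayed identity for the weight is in fact algebraically equivalent to $\Delta_{\mathrm{hyp}}\log(1-|\zeta|^{2})^{kq}=-kq$, so the two proofs rest on the same mechanism, but yours makes the source of the constant $kq$ transparent, replaces three product-rule terms and two separate integrations by parts with one application of Green's identity, and generalizes verbatim to any weight with constant hyperbolic Laplacian of its logarithm. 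You are also more careful than the paper about the zero set of the holomorphic factor (the paper silently uses $\triangle|\Phi|^{2}=4|\nabla|\Phi||^{2}$ as if $|\Phi|$ were smooth everywhere). The one soft spot is your closing density step: convergence of finite linear combinations in $L^{q}(d\nu)$ does not by itself give convergence of $\int|\nabla|f_{n}|^{q/2}|^{2}$ to $\int|\nabla|f|^{q/2}|^{2}$ (lower semicontinuity gives only one inequality), so you should either run the excision argument around the finitely many zeros of $F$ in each large disk directly for general $\psi$, or supplement the density argument with an explicit argument that the Dirichlet integrals converge (e.g.\ via the already-established identity on the approximants together with weak convergence of the gradients). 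This is a repairable technicality, not a flaw in the method.
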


 \begin{proof}

 Using the expression (\ref{2.2}) for the coherent state transforms in
 $\mathfrak{F}_{k}$, we can write:
\begin{eqnarray*}
 |\mathcal{L}\psi(\zeta)|^{q/2}=(1-|\zeta|^{2})^{kq/2}\left|\sum_{m=0}^{\infty}\left(\frac{\Gamma(m+2k)}{m!\Gamma(2k)}\right)^{\frac{1}{2}}\bar{a}_{m}\zeta^{m}\right|^{q/2}=(1-|\zeta|^{2})^{kq/2}|\Phi(\zeta)|
\end{eqnarray*}
where $\Phi(\zeta)$ is holomorphic in $\zeta$. Thus $\Phi(\zeta)$
satisfies the Cauchy-Riemann equations on the unit disk/hyperbolic
plane. Let us do our computations in terms of the radial variable
$\tau$ and the angular variable $\phi$ on the two-dimensional
hyperbolic plane. The gradient is then given by: $\displaystyle
\nabla=\left(\frac{\partial}{\partial\tau},\frac{1}{\sinh\tau}\frac{\partial}{\partial\phi}\right)$

A brief computation yields the following Cauchy-Riemann equations
for an analytic function $\Phi=u+iv$ on the hyperbolic plane:

\begin{eqnarray*}
 \frac{\partial u}{\partial\tau}=\frac{1}{\sinh\tau}\frac{\partial
v}{\partial\phi}\quad,\quad \frac{\partial
u}{\partial\phi}=-\sinh\tau\frac{\partial v}{\partial\tau}
\end{eqnarray*}
Using these two equations we obtain the following:
\begin{displaymath} \nabla u\cdot\nabla v=\frac{\partial
u}{\partial\tau}\frac{\partial
v}{\partial\tau}+\frac{1}{\sinh^{2}\tau}\frac{\partial
u}{\partial\phi}\frac{\partial v}{\partial\phi}=0\end{displaymath}
\begin{displaymath}
|\nabla u|^{2}=|\nabla v|^{2}
\end{displaymath}

We now compute some results for the non-holomorphic pre-factor in
the expression for the coherent state transforms.
\begin{displaymath}
\nabla(1-|\zeta|^{2})^{kq/2}=\left(\frac{\partial}{\partial\tau},\frac{1}{\sinh\tau}\frac{\partial}{\partial\phi}\right)(1-\tanh^{2}\frac{\tau}{2})^{kq/2}
=\left(-\frac{kq}{2}\tanh\frac{\tau}{2}\mbox{ sech
}^{kq}\frac{\tau}{2},0\right)
\end{displaymath}
Also,
\begin{eqnarray*}
\triangle(1-|\zeta|^{2})^{kq/2}=\left(\frac{\partial^{2}}{\partial\tau^{2}}+\coth\tau\frac{\partial}{\partial\tau}\right)\left(1-\tanh^{2}\frac{\tau}{2}\right)^{kq/2}
=\left(\frac{kq}{2}\right)^{2}\tanh^{2}\frac{\tau}{2}\mbox{ sech
}^{kq}\frac{\tau}{2}- \frac{kq}{2} \mbox{ sech }^{kq}\frac{\tau}{2}
\end{eqnarray*}

As for the holomorphic part of the transform, the Cauchy-Riemann
equations guarantee that:
\begin{displaymath}
\triangle|\Phi|^{2}=4|\nabla|\Phi||^{2}
\end{displaymath}
Thus:
\begin{eqnarray*}
 &&|\nabla|\mathcal{L}\psi(\zeta)|^{q/2}|^{2}\\
 &=& (1-|\zeta|^{2})^{kq}|\nabla|\Phi||^{2}+|\nabla(1-|\zeta|^{2})^{kq/2}|^{2}|\Phi|^{2}+2(1-|\zeta|^{2})^{kq/2}\nabla(1-|\zeta|^{2})^{kq/2}\cdot|\Phi|\nabla|\Phi|\\
&=&(1-|\zeta|^{2})^{kq}|\nabla|\Phi||^{2}+\frac{1}{4}|\Phi|^{2}(1-|\zeta|^{2})^{-kq}|\nabla(1-|\zeta|^{2})^{kq}|^{2}
+\frac{1}{2}\nabla(1-|\zeta|^{2})^{kq}\cdot\nabla|\Phi|^{2}\\
 &=&(1-|\zeta|^{2})^{kq}|\nabla|\Phi||^{2}+\frac{1}{4}|\Phi|^{2}\left(\triangle(1-|\zeta|^{2})^{kq}+kq(1-|\zeta|^{2})^{kq}\right)\\
 &\quad&+\frac{1}{2}\left(\nabla\cdot((1-|\zeta|^{2})^{kq}\nabla|\Phi|^{2})-(1-|\zeta|^{2})^{kq}\triangle|\Phi|^{2}\right)
\end{eqnarray*}
We notice that the divergence term, when integrated with respect to
the invariant measure $d\nu(\zeta)$
yields a vanishing surface integral for $\displaystyle kq>2$. Also,
\begin{displaymath}
\frac{1}{4}|\Phi|^{2}\triangle(1-|\zeta|^{2})^{kq}=\frac{1}{4}\left(\nabla\cdot(|\Phi|^{2}\nabla(1-|\zeta|^{2})^{kq})-
\nabla|\Phi|^{2}\cdot\nabla(1-|\zeta|^{2})^{kq}\right)
\end{displaymath}
We can ignore the divergence terms coming from the expression above
again by the same logic as before and write
\begin{eqnarray*}
\frac{1}{4}\int|\Phi|^{2}\triangle(1-|\zeta|^{2})^{kq}d\nu(\zeta)
=\frac{1}{4}\int(1-|\zeta|^{2})^{kq}\triangle|\Phi|^{2}d\nu(\zeta)
\end{eqnarray*}
Putting these all together we finally arrive at:
\begin{eqnarray*}
\int|\nabla|\mathcal{L}\psi(\zeta)|^{q/2}|^{2}d\nu(\zeta)=\int(1-|\zeta|^{2})^{kq}\left(|\nabla|\Phi||^{2}-\frac{1}{4}\triangle|\Phi|^{2}\right)d\nu(\zeta)+\frac{1}{4}kq\int|\Phi|^{2}(1-|\zeta|^{2})^{2k}d\nu(\zeta)
\end{eqnarray*}
The first term on the right hand side in the equation above,
vanishes due to analyticity of $\Phi$ as we have already shown,
yielding the following identity:
\begin{eqnarray*}
\int|\nabla|\mathcal{L}\psi(\zeta)|^{q/2}|^{2}d\nu(\zeta)=\frac{1}{4}kq\int|\mathcal{L}\psi(\zeta)|^{q}d\nu(\zeta)
\end{eqnarray*}
\end{proof}
\subsection{A Sharp Sobolev inequality and a Norm Estimate} We now prove a sharp Sobolev inequality for functions in $\mathfrak{H}$.
\begin{theorem}
For all functions in $\mathfrak{H}$ the following inequality holds:
\begin{eqnarray}\label{3.1}
\| f\|_{q}^{q} +
\frac{4}{kq(kq-2)}\int|\nabla|f|^{q/2}|^{2}\geq\left(\frac{2k-1}{kq-1}\right)\left(\frac{kp-1}{2k-1}\right)^{q/p}{\left(\frac{kq-1}{kq-2}\right)\|f\|_{p}^{q}}
\end{eqnarray}
where $p=q+1/k$, $q\geq 2$, $kq> 2$ and the norms are computed with
respect to the measure $d\nu(\zeta)$; equality is obtained if and
only if the function $f$ is a coherent state.
\end{theorem}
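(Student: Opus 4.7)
The inequality (3.1) is naturally attacked as a constrained variational problem. Substitute $u := |f|^{q/2}$ and $s := 2p/q = 2 + 2/(kq)$. Since $kq > 2$, we have $s > 2$. Then $\|f\|_q^q = \int u^2\,d\nu$, $\|f\|_p^q = \bigl(\int u^s\,d\nu\bigr)^{2/s}$, and the gradient term in (3.1) is already in the form $\int |\nabla u|^2\,d\nu$. With
\begin{equation*}
\mathcal{E}[u] := \int u^2\,d\nu + \frac{4}{kq(kq-2)}\int |\nabla u|^2\,d\nu,
\end{equation*}
inequality (3.1) is equivalent to the assertion $\mathcal{E}[u] \ge C(k,q)\,\|u\|_s^2$, where $C(k,q)$ is the explicit constant on the right-hand side. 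By homogeneity it suffices to minimize $\mathcal{E}[u]$ over $u \ge 0$ with $\|u\|_s = 1$.

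Applying the hyperbolic symmetric-decreasing rearrangement centered at the origin preserves $\|u\|_2$ and $\|u\|_s$ while not increasing $\int |\nabla u|^{2}\,d\nu$ (P\'olya--Szeg\H{o} on $\mathbb{H}^{2}$), so I may restrict to radial, nonincreasing $u$ vanishing at $\partial D$. Existence of a minimizer in this class follows by the direct method, the radial constraint being enough to defeat the non-compactness coming from the transitive $SU(1,1)$-action. Any such positive radial minimizer must then satisfy the Euler--Lagrange equation
\begin{equation*}
-\Delta u + \frac{kq(kq-2)}{4}\,u \;=\; c\,u^{s-1}, \qquad c > 0,
\end{equation*}
a semi-linear Poisson equation on $\mathbb{H}^{2}$. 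The main obstacle is the uniqueness (up to positive scaling) of such a radial positive solution decaying to zero at $\partial D$; this is precisely the content of Section 5, whose proof occupies the bulk of the paper exactly because the ODE shooting and comparison arguments available for analogous equations on $\mathbb{R}^n$ do not transpose to the hyperbolic setting.

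Granting that uniqueness, one finishes as follows. The coherent-state candidate $u_{0}(\zeta) := (1-|\zeta|^{2})^{kq/2}$ is radial, positive, and vanishes at $\partial D$. Plugging the formula for $\Delta(1-|\zeta|^{2})^{kq/2}$ already recorded in the proof of Theorem 3.1 into the left-hand side of the Euler--Lagrange equation and using $1-\tanh^{2}(\tau/2) = \mathrm{sech}^{2}(\tau/2)$, one line of algebra collapses the result to $(kq/2)^{2}\,u_{0}^{s-1}$, so $u_{0}$ is indeed a solution. By the uniqueness step, every minimizer is a positive multiple of $u_{0}$. Evaluating $\mathcal{E}[u_{0}]$ and $\|u_{0}\|_{s}$ via the elementary Beta-function integral $\int_{D}(1-|\zeta|^{2})^{a}\,d\nu = (2k-1)/(a-1)$ (for $a > 1$) produces exactly the constant $C(k,q)$ on the right-hand side of (3.1), confirming sharpness. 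Finally, applying the unitary $SU(1,1)$-action---which acts by hyperbolic isometries on $D$ and leaves both $\mathcal{E}$ and $\|\cdot\|_{s}$ invariant---sends $u_{0}$ through the full orbit of coherent states, giving precisely the stated characterization of the equality cases.
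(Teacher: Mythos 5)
Your proposal is correct and follows essentially the same route as the paper: reduce to a variational problem for $u=|f|^{q/2}$, pass to radial decreasing rearrangements, extract a minimizer satisfying the semi-linear Euler--Lagrange equation on $\mathbb{H}^2$, verify that $u_0=(1-|\zeta|^2)^{kq/2}$ solves it, and invoke the Section 5 uniqueness theorem to conclude sharpness. Your version is in fact slightly more complete than the paper's (the explicit Beta-integral evaluation of the constant and the $SU(1,1)$-orbit description of the equality cases are only implicit there), and your normalization of the Euler--Lagrange equation, with linear coefficient $kq(kq-2)/4$, is the one consistent with $u_0$ being a solution.
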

\begin{proof}
Proving Theorem $3.2$ is equivalent to showing that the infimum of
the functional
\begin{displaymath}
I[f]=\frac{\| f\|_{q}^{q} +
\frac{4}{kq(kq-2)}\int|\nabla|f|^{q/2}|^{2}}{\left(\frac{kq-1}{kq-2}\right)\|f\|_{p}^{q}}
\end{displaymath}
is
$\displaystyle\left(\frac{2k-1}{kq-1}\right)\left(\frac{kp-1}{2k-1}\right)^{q/p}$.
Since we are in the function space $\mathfrak{H}$, the existence of
the minimum is obvious. Let us take a minimizing sequence
$\{f_{n}\}$. We can now perform a radially symmetric decreasing
rearrangement, since the gradient norm can only decrease under such
a rearrangement while the other norms in the functional stay
constant. So each function in the minimizing sequence is replaced by
its decreasing rearrangement. Functions in the new sequence
$\{f_{n}^{*}\}$ thus obtained also have bounded norms and gradient
norms. The sequence being monotone and bounded we can use Helly's
principle to obtain a convergent subsequence. Since the functions
are in $W^{1,2}$, the convergence is in the $s$-norm, for all finite
$s$, by Rellich-Kondrashov theorem. We thus need to show that in a
class of radially symmetric solutions the minimizer is unique. The
minimizer satisfies the following Euler-Lagrange equation for our
optimization problem:

\begin{eqnarray}\label{3.2}
\triangle u +kq(kq-2)[\gamma u^{1+\frac{2}{kq}}-u]=0
\end{eqnarray}
where $u=|f|^{q/2}$, $\triangle$ is the Laplacian on the hyperbolic
plane (or, equivalently, the unit disk), $\gamma>0$  is fixed by
choosing the $p$-norm of the function $f$. It is readily seen that
this Euler-Lagrange equation is solved by the coherent state:
$f=A(1-|\zeta|^{2})^{k}$ where $A$ is a constant determined by
fixing the $p$-norm. Since we are dealing with radial functions
only, (\ref{3.2}) is equivalent to an ordinary differential
equation. We now refer to section 5, where we prove in detail that
there is only one solution of this ODE, in the space of radially
symmetric functions on the unit disk, which decays to zero at the
boundary of the disk (or, equivalently, decays to zero as the radial
coordinate on the hyperbolic plane tends to infinity). On the basis
of this uniqueness result we can conclude that the coherent state
$f=A(1-|\zeta|^{2})^{k}$ is indeed the unique solution and hence
furnishes the minimum.
\end{proof}

This sharp Sobolev inequality, coupled with our Fisher information
identity, trivially yields the following corollary:
\begin{corollary}
For all functions in $\mathfrak{F}_{k}$ the following inequality
holds:
\begin{eqnarray}\label{3.3}
||f||_{q}^{q}\geq
\left(\frac{2k-1}{kq-1}\right)\left(\frac{kp-1}{2k-1}\right)^{q/p}||f||_{p}^{q}
\end{eqnarray}
where $q\geq 2$; equality is obtained if and only if the function
$f$ is a coherent state.
\end{corollary}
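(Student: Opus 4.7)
The plan is to exploit the fact that $\mathfrak{F}_k$ sits inside $\mathfrak{H}$, so the sharp Sobolev inequality of Theorem 3.2 applies to every $f \in \mathfrak{F}_k$, while the Fisher information identity of Theorem 3.1 supplies additional information that holds only on $\mathfrak{F}_k$. The combination is essentially algebraic, which is why the corollary is labeled as trivial.

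For $f \in \mathfrak{F}_k$, I would apply Theorem 3.1 to rewrite the gradient term on the left-hand side of (3.1). Specifically,
\[
\frac{4}{kq(kq-2)}\int |\nabla|f|^{q/2}|^{2}\,d\nu(\zeta) \;=\; \frac{4}{kq(kq-2)}\cdot\frac{kq}{4}\|f\|_q^q \;=\; \frac{1}{kq-2}\|f\|_q^q.
\]
Consequently the entire left-hand side of the Sobolev inequality collapses to
\[
\|f\|_q^q + \frac{1}{kq-2}\|f\|_q^q \;=\; \frac{kq-1}{kq-2}\|f\|_q^q.
\]

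Next, I would observe that the factor $\frac{kq-1}{kq-2}$ appears verbatim on the right-hand side of (3.1). Cancelling it from both sides yields exactly (3.3). For the equality statement, Theorem 3.2 identifies the extremals as coherent states, and every coherent state lies in $\mathfrak{F}_k$, so the equality characterization transfers without modification.

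There is no genuine obstacle here: the entire content resides in Theorems 3.1 and 3.2. The only minor point worth checking is compatibility of hypotheses between the two input results, namely that the assumption $kq > 2$ needed by Theorem 3.1 is available in the corollary's setting; it is, being inherited from Theorem 3.2.
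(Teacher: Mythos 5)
Your proposal is correct and is essentially identical to the paper's own proof: both substitute the Fisher information identity into the gradient term of (\ref{3.1}), collapse the left-hand side to $\left(\frac{kq-1}{kq-2}\right)\|f\|_q^q$, and cancel the common factor $\frac{kq-1}{kq-2}$ against the right-hand side. The remark about the hypothesis $kq>2$ being inherited is a sensible sanity check and is consistent with the paper.
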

\begin{proof}
The Fisher information identity for functions in $\mathfrak{F}_{k}$
tells us:
\begin{eqnarray*}
\int|\nabla|f|^{q/2}|^{2}=\frac{1}{4}kq\int|f|^{q}
\end{eqnarray*}
We can thus re-write the left hand side of (\ref{3.1}) as:
\begin{eqnarray*}
\| f\|_{q}^{q} +
\frac{4}{kq(kq-2)}\int\|\nabla|f|^{q/2}|^{2}=\left(\frac{kq-1}{kq-2}\right)\|f\|_{q}^{q}
\end{eqnarray*}
So now our sharp Sobolev inequality yields the following norm
estimate for functions in $\mathfrak{F}_{k}$:
\begin{eqnarray*}
\|f\|_{q}^{q}\geq\left(\frac{2k-1}{kq-1}\right)\left(\frac{kp-1}{2k-1}\right)^{q/p}\|f\|_{p}^{q}
\end{eqnarray*}
\end{proof}

\subsection{A Lower Bound for the Wehrl Entropy of functions in $\mathfrak{F}_{k}$}
We now derive a lower bound for the entropy of functions in
$\mathfrak{F}_{k}$.

\begin{theorem}
The Wehrl entropy associated with
$\mathcal{L}\psi(\zeta)\in\mathfrak{F}_{k}$ has a lower bound given
by:
\begin{eqnarray}\label{3.4}
S(|\mathcal{L}\psi(\zeta)|^{2})\geq
2k\ln\left(1+\frac{1}{2k-1}\right)
\end{eqnarray}
\end{theorem}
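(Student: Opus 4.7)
The plan is to combine the sharp norm estimate of Corollary $3.3$ at the endpoint $q=2$ with a convexity (Jensen) argument that converts an upper bound on an $L^p$-norm, with $p$ close to $2$, into a lower bound on the Wehrl entropy of $|\mathcal{L}\psi|^2$. The main point is that since $\mathcal{L}$ is an isometry, $\|\mathcal{L}\psi\|_2 = 1$, so $|\mathcal{L}\psi|^2\, d\nu(\zeta)$ is a probability measure on the unit disk, and the Wehrl entropy can be read off from the $p$-derivative of $\ln\int|\mathcal{L}\psi|^p\,d\nu$ at $p=2$.

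\textbf{Step 1: specialize Corollary $3.3$.} Set $q=2$ and $p=2+\tfrac{1}{k}$, so $kp-1 = 2k$ and $kq-1=2k-1$. The first factor in $(3.3)$ collapses to $1$, and one obtains
\begin{equation*}
1 \;=\; \|\mathcal{L}\psi\|_2^{2} \;\geq\; \left(\frac{2k}{2k-1}\right)^{2/p}\|\mathcal{L}\psi\|_p^{2},
\end{equation*}
i.e.\ $\|\mathcal{L}\psi\|_p^{p} \leq \dfrac{2k-1}{2k}$. This is the only place where Corollary $3.3$ (and therefore the sharp Sobolev inequality, the Fisher identity, and the uniqueness theorem) is used.

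\textbf{Step 2: convexity argument.} Writing $f=\mathcal{L}\psi$ and viewing $|f|^2 d\nu$ as a probability measure, Jensen's inequality applied to the convex function $\exp$ gives
\begin{equation*}
\|f\|_p^{p} \;=\; \int |f|^{p-2}\,|f|^{2}\,d\nu \;=\; \int e^{(p-2)\ln|f|}\,|f|^{2}\,d\nu \;\geq\; \exp\!\left((p-2)\int\ln|f|\cdot|f|^{2}\,d\nu\right).
\end{equation*}
Since $\int\ln|f|\cdot|f|^2\,d\nu = \tfrac12\int|f|^2\ln|f|^2\,d\nu = -\tfrac12 S(|f|^2)$, this reads
\begin{equation*}
\|f\|_p^{p} \;\geq\; \exp\!\left(-\tfrac{p-2}{2}\,S(|\mathcal{L}\psi|^{2})\right).
\end{equation*}

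\textbf{Step 3: combine.} The two bounds give $\exp\!\bigl(-\tfrac{p-2}{2}S(|\mathcal{L}\psi|^2)\bigr) \leq \tfrac{2k-1}{2k}$. Taking logarithms, using $p-2 = 1/k$, and rearranging yields
\begin{equation*}
S(|\mathcal{L}\psi|^{2}) \;\geq\; \frac{2}{p-2}\ln\frac{2k}{2k-1} \;=\; 2k\ln\!\left(1+\frac{1}{2k-1}\right),
\end{equation*}
which is $(3.4)$. There is no substantive obstacle beyond choosing the right endpoint $p=2+1/k$ in the corollary; the only subtlety worth highlighting is that the bound is optimal at this endpoint precisely because equality in $(3.3)$ is attained by coherent states (by Theorem $3.2$), and the Jensen step is tight for constant $|f|$, which is the behavior of $|\mathcal{L}\psi|$ weighted against $|\mathcal{L}\psi|^2 d\nu$ when $\psi$ is itself a coherent state in the semi-classical limit.
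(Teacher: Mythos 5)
Your proof is correct and follows essentially the same route as the paper: the paper also specializes Corollary 3.3 to $q=2$, $p=2+1/k$ to get $\|\mathcal{L}\psi\|_p^p\leq\frac{2k-1}{2k}$, and its appeal to ``logarithmic convexity of the $p$-norm'' (i.e.\ the supporting-line bound $\varphi(2+1/k)\geq\varphi(2)+\frac{1}{k}\varphi'(2)$ for $\varphi(p)=\ln\|f\|_p^p$) is exactly your Jensen step in different clothing. No gaps.
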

\begin{proof}
Let us define, for any function $f$,
$\varphi(p)=\ln||f||_{p}^{p}=\ln\int |f|^{p}$. Then, we have:
\begin{eqnarray*}
S(|f|^{2})=-2\int |f|^{2}\ln|f|=-2\varphi'(2)
\end{eqnarray*}
if $||f||_{2}=1$. By logarithmic convexity of the $p$-norm:
\begin{eqnarray*}
-2\varphi'(2)\geq -2k \varphi\left(2+\frac{1}{k}\right)
\end{eqnarray*}

If we now set $\displaystyle q=2, p=2+\frac{1}{k}$ in Corollary
$3.3$, we have:
\begin{eqnarray*}
\|\mathcal{L}\psi(\zeta)\|_{2+\frac{1}{k}}^{2+\frac{1}{k}}\leq
\left(\frac{2k-1}{2k}\right)
\end{eqnarray*} since $\|\mathcal{L}\psi(\zeta)\|_{2}^{2}=1$, by definition. This implies, in $\mathfrak{F}_{k}$:
\begin{eqnarray*}
\varphi\left(2+\frac{1}{k}\right)\leq
\ln\left(\frac{2k-1}{2k}\right)
\end{eqnarray*}

 Thus:
\begin{eqnarray*}
-2\varphi'(2)&\geq&
-2k\varphi\left(2+\frac{1}{k}\right)\\
\mbox{or, }\quad S(|\mathcal{L}\psi(\zeta)|^{2})&\geq&
2k\ln\left(1+\frac{1}{2k-1}\right)
\end{eqnarray*}
\end{proof}

A comparison between (\ref{2.3}) and (\ref{3.4}) shows that the
estimate obtained above has the conjectured high-spin asymptotics up
to, but not including, first and higher order terms in $(k^{-1})$
because $\displaystyle
\ln\left(1+\frac{1}{2k-1}\right)=2k\left(\frac{1}{2k-1}-\frac{1}{2}\frac{1}{(2k-1)^{2}}+...\right)$.
In fact this is completely analogous to the lower bound Bodmann
[Bod] obtained for coherent state transforms on the sphere
$\mathbb{S}^{2}$.

\section{Entropy-Energy Inequalities on the Hyperbolic Plane
$\mathbb{H}^{2}$} \vspace{.2in}

We say a Riemannian manifold $M$ with measure $d\mathcal{M}$ admits
a logarithmic Sobolev inequality with constant $C$ if:
\begin{eqnarray}\label{sls}
\int_{M} |f|^{2}\ln |f|^{2}d\mathcal{M}\leq C\int_{M}|\nabla
f|^{2}d\mathcal{M}\qquad\mbox{for all $f$ such that
$\displaystyle\int_{M}|f|^{2}d\mathcal{M}=1$}
\end{eqnarray}
Since the Fisher information associated with a function is often
regarded as an ``energy'', one can say that logarithmic Sobolev
inequalities give a bound on the entropy of a function $f$ in terms
of its energy $\displaystyle
E(f)=\int_{M}|\nabla|f||^{2}d\mathcal{M}$.

Even if $C$ is the best possible constant in (\ref{sls}), this is
only one of a whole family of sharp inequalities, and in many
applications, use of the whole family leads to more incisive
results.

To obtain this family of inequalities, one must determine, for  each
$A>0$, the least value of $B$ for which
\begin{eqnarray}\label{3.6}
\int_{M} |f|^{2}\ln |f|^{2}d\mathcal{M}\leq A\int_{M}|\nabla
f|^{2}d\mathcal{M} + B \qquad\mbox{for all $f$ such that
$\displaystyle\int_{M}|f|^{2}d\mathcal{M}=1$}
\end{eqnarray}
is true. Call this optimal choice $B(A)$. If one then defines an
increasing concave function $\Phi$ through
$$\Phi(t) = \inf_{A>0}\{\ At + B(A)\}\ ,$$
one has
$$\int_{M}|f|^{2}\ln|f|^{2}d\mathcal{M}\leq\Phi(E(f))\ $$
for all $f$ with  $\displaystyle\int_{M}|f|^{2}d\mathcal{M}=1$.

Conversely, given the optimal function $\Phi(t)$, $B(A)$ can be
recovered: It is just the $y$--intercept of the tangent line to $y =
\Phi(t)$ at the value of $t$ for which $\Phi'(t) = A$.

Thus, determining an optimal entropy energy inequality is
essentially equivalent to solving an ``$AB$'' type problem in the
sense of Hebey [Heb]: Obviously, if (\ref{3.6}) holds for some $A$
(that is, if, given some $A$, one can find a constant $B$ such that
(\ref{3.6}) is valid), then it holds for all $A'\geq A$. Similarly,
if (\ref{3.6}) is valid for some $B$, it remains valid for all
$B'\geq B$. Thus, it is natural to ask: what is the smallest
constant $A$ (or $B$) for which one can find a constant $B$
(respectively, $A$) such that inequality (\ref{3.6}) holds? In fact,
these questions arise naturally whenever one has a Sobolev-type
inequality on a Riemannian manifold [Heb]. The smallest $A$ for
which (\ref{3.6}) holds is called the first best constant while the
smallest such $B$ is called the second best constant with respect to
the inequality (\ref{3.6}). Given any Sobolev-type inequality on
some Riemannian manifold, Hebey associated two parallel research
programs with the notion of best constants. The $A$-part of the
program gives priority to the first best constant while the $B$-part
is concerned with the second best constant.

As mentioned in the introduction, on $\mathbb{R}^{2}$, the optimal
entropy--energy function $\Phi_{\R^2}(t)$  is given by
$$\Phi_{\R^2}(t) =
\ln\left(\frac{1}{\pi e}t \right)\ .$$Thus:
\begin{eqnarray*}
\int_{\mathbb{R}^{2}} |f|^{2}\ln|f|^{2}\leq \ln\left(\frac{1}{\pi
e}E(f)\right)
\end{eqnarray*}
Equality is achieved when $f$ is an isotropic Gaussian function. For
an appropriate choice of the variance of the Gaussian, the energy
$E(f)$ can take any value, so this inequality is sharp for all
values of $E(f)$.

In the case of $\mathbb{H}^{2}$ , Beckner proved [Bec2] that the
entropy has the same bound as in $\mathbb{R}^{2}$, i.e.,
\begin{eqnarray*}
\int_{\mathbb{H}^{2}}|f|^{2}\ln|f|^{2}\leq\ln\left(\frac{1}{\pi
e}E(f)\right)
\end{eqnarray*}

In other words,
$$\Phi_{\mathbb{H}^{2}} \le \Phi_{\R^2}\ .$$

This result is asymptotically sharp for small $t$ as explained in
the introduction. However, the inequality is actually strict, and
significantly so, for large $t$. Here we prove an improved bound:

For $t>0$, define $\Phi^\star(t) $ by
$$\Phi^\star(t) = \inf_{k\in \N}\left\{
\frac{1}{2}\ln\left[\left(\frac{2k-2}{2k-1}\right)^{2k+1}\left(\frac{2k-1}{2k}\right)^{2k}\left(\frac{2k-1}{4\pi}\right)\left(1+\frac{1}{k(k-1)}t
\right)^{2k+1}\right] \right\}\ .$$ Notice that this is an infimum
over a a family of increasing, concave functions. As such, it is
increasing and concave.

While we cannot explicitly evaluate the infimum that defines
$\Phi^\star(t) $, we have the following result:

\begin{theorem}  For all $t>0$,
$$\Phi_{\mathbb{H}^{2}} \le   \Phi^\star(t)  < \Phi_{\R^2}\ .$$

\end{theorem}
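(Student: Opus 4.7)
The plan is to view Theorem 3.2 as a one--parameter family of sharp Sobolev inequalities indexed by the quantum number $k$, convert each instance into an entropy--energy inequality by the same log--convexity device used in the proof of Theorem 3.4, and take the pointwise infimum over $k$. Strict comparison with Beckner's $\Phi_{\R^2}$ will then follow from an asymptotic analysis of the resulting family.

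Concretely, for each admissible $k$ I would specialise Theorem 3.2 to $q=2$, $p=2+1/k$. Writing $t=E(f)$ and normalising $\|f\|_{L^2(d\nu_k)}=1$, the inequality reads
$$1\;+\;\frac{t}{k(k-1)}\;\ge\; C_k\,\|f\|_p^{\,2},\qquad C_k=\left(\frac{2k}{2k-1}\right)^{\!2k/(2k+1)}\!\!\left(\frac{2k-1}{2k-2}\right),$$
which, raised to the $p/2$--th power, bounds $\|f\|_p^{\,p}=\int|f|^p\,d\nu_k$ from above. Log--convexity of $p\mapsto \ln\int|f|^p\,d\nu_k$ at $p=2$ (precisely the inequality $-2\varphi'(2)\ge -2k\varphi(2+1/k)$ used to prove Theorem 3.4) converts this $L^p$-bound into an upper bound on $\int|f|^2\ln|f|^2\,d\nu_k$, and the change of normalising measure $d\nu_k=\tfrac{2k-1}{4\pi}d\mu$ shifts the entropy by the additive constant $\ln\tfrac{2k-1}{4\pi}$. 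The outcome, for each $k$, is an inequality $\int|f|^2\ln|f|^2\,d\mu\le G(k,t)$ whose right-hand side is exactly the $k$th term in the infimum defining $\Phi^\star$. Since each $G(k,\cdot)$ is increasing and concave, so is $\Phi^\star=\inf_k G(k,\cdot)$, and the bound $\Phi_{\mathbb{H}^2}\le\Phi^\star$ is immediate from the definition of $\Phi_{\mathbb{H}^2}$ as the least admissible $\Phi$.

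For the strict inequality $\Phi^\star(t)<\Phi_{\R^2}(t)$, rather than minimising over $k$ in closed form I would exhibit, for each $t>0$, a specific $k=k(t)\in\N$ at which the $k$th element of the family already strictly undercuts Beckner. The natural choice is $k\sim 2t$: expanding $(1+t/(k(k-1)))^{2k+1}$, $(1-1/(2k-1))^{2k+1}$, $(1-1/(2k))^{2k}$ and the linear factor $(2k-1)/(4\pi)$ to next order in $1/k$ shows that their leading contributions combine to reproduce exactly Beckner's constant $-\ln(\pi e)$, while the first correction has a definite sign coming from the strict convexity of $\ln(1+x)-x$. I expect the principal obstacle to lie in this last step --- passing from an asymptotic strict inequality to a \emph{uniform} one valid for every $t>0$. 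The large--$t$ regime is handled by the expansion above, but the small-- and moderate--$t$ regimes will require either an explicit window--by--window verification (choosing a fixed $k_n$ on each window $(t_n,t_{n+1})$ and using concavity of both sides to patch the local comparisons into a global one), or a more conceptual argument noting that equality in Theorem 3.2 is attained only by coherent states, none of which are Beckner extremizers on $\mathbb{H}^2$ (Beckner's bound being itself strict there), so the chain of inequalities leading to $\Phi^\star$ cannot saturate $\Phi_{\R^2}$ at any point of $(0,\infty)$.
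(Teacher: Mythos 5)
Your first half is exactly the paper's argument: specialize Theorem 3.2 to $q=2$, $p=2+1/k$, pass to the $d\mu$ normalization, apply the log-convexity step from Theorem 3.4 to get, for each $k$, the entropy--energy bound whose right-hand side is the $k$th term of $\Phi^\star$, and take the infimum. For the strict comparison $\Phi^\star<\Phi_{\R^2}$ the paper does not expand the full expression at $k\sim 2t$ as you propose; instead it first linearizes both sides into $A$--$B$ form --- replacing $\tfrac{2k+1}{2}\ln\bigl(1+\tfrac{t}{k(k-1)}\bigr)$ by $\tfrac{2k+1}{2}\cdot\tfrac{t}{k(k-1)}$, and replacing Beckner's $\ln t$ by its tangent line at $x_0$ --- then matches slopes via $1/x_0=(2k+1)/(k(k-1))$ and compares intercepts, finding $C_{x_0}-C_k=\tfrac{1}{2k}+O(k^{-2})>0$. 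Since $x_0\approx k/2$, this is the same asymptotic computation as your choice $k\sim 2t$, just packaged through supporting lines rather than a direct expansion; the $A$--$B$ packaging is slightly cleaner because the two parallel lines are separated by a fixed gap for all $t$ at once, whereas your direct expansion has to be controlled in a window of $t$ around $k/2$.

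The obstacle you flag --- promoting the asymptotic, $k$-matched comparison to a strict inequality uniform over all $t>0$ --- is genuine, and you should know that the paper's own proof does not really dispose of it either: the intercept comparison shows $L_k(t)<T_{x_0}(t)$ for every $t$, but $T_{x_0}$ touches $\Phi_{\R^2}$ only at $t=x_0=k(k-1)/(2k+1)$, so without estimating the quadratic gap $T_{x_0}(t)-\Phi_{\R^2}(t)$ against the $\tfrac{1}{2k}$ intercept gain one only gets strictness near the discrete points $x_0$. Your window-by-window patching is the honest way to finish this (and concavity of both sides does make it workable for large $t$, where consecutive windows overlap). Your second proposed remedy, however, does not work: $\Phi^\star$ is an explicit function, and the statement $\Phi^\star(t)<\Phi_{\R^2}(t)$ is a pointwise inequality between two explicit functions; the fact that the coherent states saturating Theorem 3.2 are not Beckner extremizers tells you that the chain of functional inequalities is not simultaneously saturated, which bears on $\Phi_{H^2}$ versus $\Phi_{\R^2}$ but gives no information about whether the upper envelope $\Phi^\star$ dips below $\Phi_{\R^2}$ at a given $t$. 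Drop that fallback and commit to the quantitative comparison.
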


\medskip

\begin{proof}
We start from
 the sharp Sobolev inequality proved in Theorem 3.2, re-written
in terms of the standard measure derived from the Poincare metric.
Recall that the measures $d\mu$ and $d\nu$ are related via:
$\displaystyle d\nu= \frac{2k-1}{4\pi}d\mu$.

If we rescale  $f$ in inequality (\ref{3.1}) so as to make it
$L^{2}$-normalized in the measure $d\mu$ and rewrite the inequality
with respect to $d\mu$, we get:
\begin{eqnarray*}
 \int f^{p}d\mu \leq
 \left(\frac{kq-1}{2k-1}\right)^{p/q}\left(\frac{2k-1}{kp-1}\right)\left(\frac{kq-2}{kq-1}\right)^{p/q}\left(\frac{2k-1}{4\pi}\right)^{p-q/q}\left[\int
 f^{q}d\mu+\frac{4}{kq(kq-2)}\int |\nabla
 f^{q/2}|^{2}d\mu\right]^{p/q}
 \end{eqnarray*}
 Putting $q=2,\quad p=2+1/k$ and using the logarithmic convexity of the $p$-norm as in the proof of
 theorem $3.4$, we obtain the following estimate:

\begin{eqnarray}\label{ne}
\int f^{2}\ln f d\mu \leq
 \frac{1}{2}\ln\left[\left(\frac{2k-2}{2k-1}\right)^{2k+1}\left(\frac{2k-1}{2k}\right)^{2k}\left(\frac{2k-1}{4\pi}\right)\left(1+\frac{1}{k(k-1)}\int
 |\nabla f|^{2}d\mu\right)^{2k+1}\right]
\end{eqnarray}

Since this holds for every $k$, we get an entropy--energy inequality
by taking the infimum over $k$, and this amounts to the inequality
$\Phi_{\mathbb{H}^{2}} \le \Phi^\star(t)$.

It remains to show that $\Phi^\star(t)  < \Phi_{\R^2}$. We shall do
this using the equivalent $A$--$B$ form of the inequality. To make
the tangent line computation and subsequent comparison with
$\Phi_{\R^2}$, and hence Beckner's estimate, we note that,
(\ref{ne}) implies:

\begin{eqnarray}\label{3.9}
\int f^{2}\ln f^{2} d\mu&\leq&
2k\ln\left(\frac{k-1}{k}\right)+\ln\left(\frac{k-1}{2\pi}\right)+\frac{2k+1}{k(k-1)}\int|\nabla
f|^{2}d\mu
\end{eqnarray}
Now Beckner's inequality [Bec2] on the upper half plane is:
\begin{eqnarray}\label{be}
\int|f|^{2}\ln|f|d\mu\leq\frac{1}{2}\ln\left[\frac{1}{\pi e}\int
|\nabla|f||^{2}d\mu\right]\end{eqnarray}

Since the logarithm is a concave function of its argument,
$\displaystyle\frac{\ln x-\ln x_{0}}{x-x_{0}}<\frac{1}{x_{0}}$,
where $x>x_{0}$. If we put $\displaystyle x=\int|\nabla f|^{2}d\mu$
in (\ref{be}), we obtain the following inequality:
\begin{eqnarray}\label{3.10}
\int f^{2}\ln f^{2}d\mu \leq \frac{1}{x_{0}}\int |\nabla
f|^{2}d\mu+\ln x_{0}-\ln\pi-2
\end{eqnarray}

Inequalities (\ref{3.9}) and (\ref{3.10}) have the form
$\displaystyle \int f^{2}\ln f^{2}d\mu\leq
C_{\epsilon}+\epsilon\int|\nabla f|^{2}d\mu$. We would like to see
how the values for the intercept $C_{\epsilon}$ compare for a given
value of the slope $\epsilon$. Let $C_{x_{0}}$ and $C_{k}$ denote
the intercepts for the inequalities parametrized by $x_{0}$ and $k$
respectively. Now, to make the comparison let us put
$\displaystyle\frac{1}{x_{0}}=\frac{2k+1}{k(k-1)}$. Then, for this
value of $x_{0}$ we have:
\begin{eqnarray*}
C_{x_{0}}=\ln x_{0}-\ln\pi-2 
&=&-\left[\frac{1}{2k}+\frac{1}{2}\left(\frac{1}{2k}\right)^{2}+...\right]+\ln(k-1)-\ln
2\pi-2
\end{eqnarray*}
On the other hand:
\begin{eqnarray*}
C_{k}&=&
2k\ln\left(\frac{k-1}{k}\right)+\ln\left(\frac{k-1}{2\pi}\right)\\
&=& \ln(k-1)-\ln
2\pi-2-\frac{1}{k}-\frac{2}{3}\left(\frac{1}{k}\right)^{2}-\frac{1}{2}\left(\frac{1}{k}\right)^{3}-...
\end{eqnarray*}

Thus, for $\displaystyle x_{0}= \frac{k(k-1)}{2k+1}$, we have:
$\displaystyle
C_{x_{0}}-C_{k}=\frac{1}{2k}+\frac{13}{24}\frac{1}{k^{2}}+...$. This
means that the logarithmic Sobolev inequality (\ref{3.9}) actually
gives an improvement on Beckner's inequality (\ref{3.10}) as regards
the second best constant and $\Phi^\star(t)  < \Phi_{\R^2}$.
\end{proof}

Another way to see the extent to which $\Phi^\star$ is a better
estimate of $\Phi_{\mathbb{H}^{2}} $ than is $\Phi_{\R^2}$ is to use
them both to estimate the entropy of our coherent state transforms,
since for them $\displaystyle E(f)=\frac{k}{2}>\frac{k(k-1)}{2k+1}$.

Inserting the value  $\displaystyle E(f)=\frac{k}{2}$ into
$\Phi_{\R^2}$ we obtain, using Beckner's estimate with respect to
the measure $d\nu(\zeta)$:
\begin{eqnarray*}
-\int |f|^{2}\ln|f|^{2} d\nu\geq 1-\ln\left(\frac{2k}{2k-1}\right)
\end{eqnarray*}
while inserting this value into $\Phi^\star$ (with respect to
measure $d\nu(\zeta)$) yields the better bound
 (\ref{3.4}).

We close this section by proving another family of logarithmic
Sobolev inequalities on the hyperbolic plane. The basic idea comes
from Beckner's paper [Bec1] where he showed how one could derive a
family of sharp Sobolev inequalities on the hyperbolic plane
$\mathbb{H}^{2}$,  from the sharp Sobolev inequality on
$\mathbb{R}^{n}$, for $n>2$.

The sharp Sobolev inequality on $\mathbb{R}^{n}$, for $n>2$ and
$1/p=1/2-1/n$  is given by [Bec1]:
\begin{eqnarray}\label{3.11}
||f||_{L^{p}(\mathbb{R}^{n})}\leq A_{p}||\nabla
f||_{L^{2}(\mathbb{R}^{n})}\nonumber \\A_{p}=[\pi
n(n-2)]^{-1/2}[\Gamma(n)/\Gamma(n/2)]^{1/n}
\end{eqnarray}
and the sharp constant is attained only for functions having the
form $A(1+|\mathbf{x}|^{2})^{-n/p}$, where
$\mathbf{x}\in\mathbb{R}^{n}$.
\begin{theorem}
The sharp Sobolev inequality (\ref{3.11}) on $\mathbb{R}^{n}$ leads
to the following one-parameter family of logarithmic Sobolev
inequalities on the hyperbolic plane $\mathbb{H}^{2}$:

\begin{eqnarray}\label{3.12}
\int g^{2}\ln g^{2}d\mu\leq
\tilde{k}\ln\left[\left(\frac{\tilde{k}-1}{\tilde{k}+1}\right)^{1+1/\tilde{k}}\left(\frac{2\tilde{k}+1}{2\pi}\right)^{1/\tilde{k}}\left(1+\frac{1}{\tilde{k}(\tilde{k}-1)}\int
|Dg|^{2}d\mu\right)^{1+1/\tilde{k}}\right]
\end{eqnarray}
where $\tilde{k}=n/p$.
\end{theorem}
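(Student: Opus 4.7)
The plan is to follow the two-step pattern already used in the proof of Theorem 4.1: first translate the sharp Euclidean Sobolev inequality (\ref{3.11}) into a sharp $L^p$--$L^2$ Sobolev inequality on $\mathbb{H}^2$, and then extract the logarithmic form by means of the logarithmic-convexity argument for the $L^p$ norm already employed in the proof of Theorem 3.4.

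For the first step I follow Beckner's reduction from [Bec1]. Set $n = 2\tilde{k}+2$, so that the Euclidean Sobolev exponent $p = 2n/(n-2) = 2+2/\tilde{k}$ satisfies $n/p = \tilde{k}$. Applying (\ref{3.11}) to functions of the form $f(x) = (1+|x|^2)^{-\tilde{k}}\,g$, with $g$ reinterpreted as a function on $\mathbb{H}^2$ via the natural conformal change of variables, the Euclidean Dirichlet integral on $\mathbb{R}^n$ splits (after integration by parts against the explicit factor $(1+|x|^2)^{-\tilde{k}}$) into a hyperbolic Dirichlet integral $\int|Dg|^2\,d\mu$ and a zeroth-order mass term proportional to $\int g^2\,d\mu$, whose coefficients stand exactly in the ratio $\tilde{k}(\tilde{k}-1)$. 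Combined with the surface-area factor from integrating over $S^{n-1}$ and the $\Gamma$-function form of $A_p$, this produces the sharp hyperbolic Sobolev inequality
\begin{equation*}
\int g^p\,d\mu \;\leq\; \left(\frac{\tilde{k}-1}{\tilde{k}+1}\right)^{1+1/\tilde{k}}\left(\frac{2\tilde{k}+1}{2\pi}\right)^{1/\tilde{k}}\left(\int g^2\,d\mu + \frac{1}{\tilde{k}(\tilde{k}-1)}\int|Dg|^2\,d\mu\right)^{1+1/\tilde{k}},
\end{equation*}
with equality when $g$ is constant (corresponding to the Euclidean extremizer $(1+|x|^2)^{-\tilde{k}}$).

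For the second step, normalize $g$ so that $\int g^2\,d\mu = 1$ and let $\varphi(p) := \ln\int g^p\,d\mu$. Then $\varphi$ is convex with $\varphi(2)=0$ and $\varphi'(2) = \int g^2\ln g\,d\mu$, so convexity yields $\varphi'(2) \leq \varphi(p)/(p-2)$. With $p = 2+2/\tilde{k}$, so that $p-2 = 2/\tilde{k}$, this reads $\int g^2\ln g^2\,d\mu \leq \tilde{k}\ln\int g^p\,d\mu$; feeding in the hyperbolic Sobolev bound from Step 1 (collapsing the $\int g^2\,d\mu = 1$ term inside the parenthesis) produces exactly (\ref{3.12}). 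The principal obstacle is Step 1: tracking Beckner's conformal bookkeeping carefully enough to confirm the precise prefactor, in particular the exponents $1+1/\tilde{k}$ and $1/\tilde{k}$ and the numerical constant $(2\tilde{k}+1)/(2\pi)$. This amounts to a computation with the $\Gamma$-function duplication formula applied to $A_p^p = [\pi n(n-2)]^{-p/2}[\Gamma(n)/\Gamma(n/2)]^{p/n}$ at $n = 2\tilde{k}+2$, together with the Jacobian factors from the conformal change of variables. Once Step 1 is secured, Step 2 is a verbatim replay of the convexity argument already used in the paper and presents no new difficulty.
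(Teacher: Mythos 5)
Your overall architecture is the same as the paper's (sharp Euclidean Sobolev $\to$ sharp hyperbolic Sobolev with a mass term $\to$ logarithmic convexity of the $p$-norm), and your Step 2 is correct and essentially verbatim the paper's argument. The gap is in Step 1: the substitution you propose, $f(x)=(1+|x|^{2})^{-\tilde{k}}g$, is the conformal weight for stereographic projection of $\mathbb{R}^{n}$ onto the sphere $S^{n}$, and it effects no dimensional reduction whatsoever — $g$ remains a function of all $n$ Euclidean variables, so there is no mechanism by which the Dirichlet integral could ``split'' into a two-dimensional hyperbolic Dirichlet integral plus a mass term with coefficient $\tilde{k}(\tilde{k}-1)$. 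The reduction Beckner actually uses, and the paper follows, is different: restrict to functions radial in the last $n-1$ coordinates, write $\mathbf{x}=(t,\mathbf{x}')$ with $y=|\mathbf{x}'|$, and set $g(t,y)=y^{n/p}f$. The point is the conformal splitting $dt^{2}+dy^{2}+y^{2}d\omega^{2}=y^{2}\bigl((dt^{2}+dy^{2})/y^{2}+d\omega^{2}\bigr)$, which identifies $\mathbb{R}^{n}$ minus the $t$-axis with a conformal multiple of $\mathbb{H}^{2}\times S^{n-2}$; one then gets $\int f^{p}\,d\mathbf{x}=S^{n-1}\int g^{p}\,d\mu$ and $\int|\nabla f|^{2}\,d\mathbf{x}=S^{n-1}\bigl[\int|Dg|^{2}\,d\mu+\tfrac{n}{p}(\tfrac{n}{p}-1)\int g^{2}\,d\mu\bigr]$ with $d\mu=dt\,dy/y^{2}$, after which the constant $(S^{n-1})^{1-2/p}A_{p}^{2}=\frac{4}{n(n-2)}\bigl(\frac{n-1}{2\pi}\bigr)^{2/n}$ yields exactly the prefactor you quote.

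A symptom of the same confusion is your claimed equality case: constants are not in $L^{2}(\mathbb{H}^{2},d\mu)$ (the hyperbolic plane has infinite volume), so $g$ constant cannot be the extremizer. Under the correct substitution the Euclidean extremizer $(1+|\mathbf{x}|^{2})^{-n/p}$ transforms into $g^{*}=A\bigl(y/(1+t^{2}+y^{2})\bigr)^{\tilde{k}}$, a genuinely non-constant, integrable bubble on the upper half-plane. Since you explicitly identify Step 1 as the load-bearing part of the proof and the substitution you commit to there is the wrong one, the proof as written does not go through; replacing it with the $y^{n/p}$-weighted product-structure reduction repairs it and recovers the paper's argument.
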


\begin{proof}
To obtain (\ref{3.12}), we first derive a family of sharp Sobolev
inequalities on the hyperbolic plane $\mathbb{H}^{2}$, as mentioned
in [Bec1]. In order to do this, let us restrict our computations to
radial functions $f$ in inequality (\ref{3.11}). We don't lose
anything by doing this since the optimizer is radial. Let us use the
product structure for Euclidean space $\mathbb{R}^{n}\simeq
\mathbb{R}\times\mathbb{R}^{n-1}$, with
$\mathbf{x}\in\mathbb{R}^{n}$ written as $(t,\mathbf{x'})$ where
$\mathbf{x'}\in\mathbb{R}^{n-1}$. Also let $y=|\mathbf{x'}|$. Now
put $g(t,y)=y^{n/p}f(t,\mathbf{x'})$. Then:
\begin{eqnarray*}
 \int f^{p} d\mathbf{x} &=& \int y^{-n}g^{p}dtd\mathbf{x'}\\
 &=& \int y^{-n}g^{p}dt (y^{n-2}dy S^{n-1})\\
 &=& S^{n-1}\int g^{p}d\mu
 \end{eqnarray*}
Here $\displaystyle d\mu= dt\frac{dy}{y^{2}}$ is the measure derived
from the Poincare metric on the upper half plane (recall that it is
equivalent to the hyperbolic plane) and in moving from the first to
the second line in the computation above we have referred
 $\mathbf{x'}$ to the $(n-1)$-dimensional spherical polar coordinate
system, so that $S^{n-1}$ in the final expression represents the
surface area of a $(n-1)$-dimensional sphere. After a similar
computation and some simplification, the expression for the
gradient-norm of $f$ is obtained as:
\begin{eqnarray*}
\int |\nabla f|^{2} d\mathbf{x} = S^{n-1}\left[\int |Dg|^{2}d\mu
+\frac{n}{p}\left(\frac{n}{p}-1\right)\int g^{2}d\mu\right]
\end{eqnarray*}
Thus inequality (\ref{3.11}) can be expressed as:
\begin{eqnarray*}
\left(\int_{\mathbb{H}^{2}} g^{p}d\mu\right)^{2/p}\leq
(S^{n-1})^{1-2/p}A_{p}^{2}\left(\int_{\mathbb{H}^{2}} |Dg|^{2}d\mu
+\frac{n}{p}\left(\frac{n}{p}-1\right)\int_{\mathbb{H}^{2}}
g^{2}d\mu\right)
\end{eqnarray*}
A short computation now shows:
$\displaystyle(S^{n-1})^{1-2/p}A_{p}^{2}=\frac{4}{n(n-2)}\left(\frac{n-1}{2\pi}\right)^{2/n}$.
We can rewrite the inequality on the hyperbolic plane as:
\begin{eqnarray*}
\int_{\mathbb{H}^{2}}g^{p}d\mu\leq
\left(\frac{4}{n(n-2)}\right)^{p/2}\left(\frac{n-1}{2\pi}\right)^{p/n}\left(\frac{n}{p}\left(\frac{n}{p}-1\right)\right)^{p/2}\left[\int_{\mathbb{H}^{2}}g^{2}d\mu
+ \frac{1}{\frac{n}{p}(\frac{n}{p}-1)}\int |Dg|^{2}d\mu\right]^{p/2}
\end{eqnarray*}
Since $p=2n/(n-2)$, the inequality above represents a one-parameter
family of inequalities. Let us introduce  a new variable
$\tilde{k}=(n-2)/2$. Then $\tilde{k}= 1/2, 1, 3/2, 2, 5/2,...$ and
$p=2+2/\tilde{k}$. In terms of $\tilde{k}$ we have the following
family of inequalities:
\begin{eqnarray}\label{3.13}
\int_{\mathbb{H}^{2}} g^{p}d\mu\leq
\left(\frac{\tilde{k}-1}{\tilde{k}+1}\right)^{1+1/\tilde{k}}\left(\frac{2\tilde{k}+1}{2\pi}\right)^{1/\tilde{k}}\left[\int_{\mathbb{H}^{2}}
g^{2}d\mu+\frac{1}{\tilde{k}(\tilde{k}-1)}\int_{\mathbb{H}^{2}}
|Dg|^{2}d\mu\right]^{1+1/\tilde{k}}
\end{eqnarray}
The sharp constant in this inequality is attained for functions
$\displaystyle
g^{*}=A\left(\frac{y}{1+t^{2}+y^{2}}\right)^{n/p}=A\left(\frac{y}{1+t^{2}+y^{2}}\right)^{\tilde{k}}$.

Using the logarithmic convexity of the $p$-norm we obtain from
(\ref{3.13}), the family of logarithmic Sobolev inequalities
(\ref{3.12}) for functions on the hyperbolic plane, which are
normalized so that their $L^{2}$-norm with respect to the measure
$d\mu$ is $1$ :
\begin{eqnarray*}
\int g^{2}\ln g^{2}d\mu\leq
\tilde{k}\ln\left[\left(\frac{\tilde{k}-1}{\tilde{k}+1}\right)^{1+1/\tilde{k}}\left(\frac{2\tilde{k}+1}{2\pi}\right)^{1/\tilde{k}}\left(1+\frac{1}{\tilde{k}(\tilde{k}-1)}\int
|Dg|^{2}d\mu\right)^{1+1/\tilde{k}}\right]
\end{eqnarray*}
\end{proof}

It is interesting to note that, for $k=3/2,2,5/2...$, one can obtain
from Theorem $3.2$, a one-parameter family of Sobolev inequalities,
which is strikingly similar to (\ref{3.13}). Referred to the
standard measure $d\mu$, Theorem $3.2$ tells us, for $q=2$ and
$p'=2+1/k$:
\begin{eqnarray}\label{3.14}
\int_{\mathbb{H}^{2}} f^{p'}d\mu\leq
\left(\frac{2k-2}{2k}\right)^{1+1/2k}\left(\frac{k}{2\pi}\right)^{1/2k}\left[\int_{\mathbb{H}^{2}}
f^{2}d\mu +\frac{1}{k(k-1)}\int_{\mathbb{H}^{2}}
|Dg|^{2}d\mu\right]^{1+1/2k}
\end{eqnarray}

It is thus very natural to compare (\ref{3.12}) with (\ref{3.4}) and
see which inequality gives a better bound for the entropy of
functions in $\mathfrak{F}_{k}$. Let us first see what (\ref{3.12})
implies for such functions. Put $\displaystyle
g=\sqrt{\frac{2k-1}{4\pi}}f$ where $f\in\mathfrak{F}_{k}$. Then we
have: $\int g^{2}d\mu=1$ and $\int |Dg|^{2}d\mu=k/2$. So, with
reference to the coherent state measure $\displaystyle
d\nu=\frac{2k-1}{4\pi}d\mu$, (\ref{3.12}) implies that:
\begin{eqnarray*}
-\int f^{2}\ln f^{2}d\nu\geq
\ln\left(\frac{2k-1}{4\pi}\right)-(1+\tilde{k})\ln\left(\frac{\tilde{k}-1}{\tilde{k}+1}\right)-\ln\left(\frac{2\tilde{k}+1}{2\pi}\right)-(1+\tilde{k})\ln\left(1+\frac{k}{2\tilde{k}(\tilde{k}-1)}\right)
\end{eqnarray*}
Optimization of the right hand side over the parameter $\tilde{k}$
doesn't seem to yield a simple result. However, we can put
$\tilde{k}=2k$ (where $k=3/2,2,...$) to make $p=p'$, so that we have
the same $L^{p}$-norms on the left hand sides of (\ref{3.13}) and
(\ref{3.14}). The resulting expression yields the lower bound:
\begin{eqnarray*}
-\int |f|^{2}\ln|f|^{2}d\nu\geq
\ln\left(\frac{2k-1}{2(4k+1)}\right)-(1+2k)\ln\left(\frac{4(2k-1)}{2k+1}\right)
\end{eqnarray*}
Obviously, Theorem $3.4$ gives a better bound for the entropy of
functions in $\mathfrak{F}_{k}$.

\section{The uniqueness theorem} \vspace{.2in}

 In this section we study (3.2) written in terms of the radial hyperbolic
coordinate. 
In what follows, we
adapt the methods described in [Kwo] to the hyperbolic setting. 

 We investigate the question of uniqueness of ground state
 solution of the equation

\begin{eqnarray}
 \displaystyle u'' +\coth\tau u'+f(u)=0
\end{eqnarray}
 where $\tau\in(0,\infty)$ on the  two-dimensional hyperbolic plane. The function $f(u)$ is given by: $\displaystyle f(u)= \tilde{a}u^{1+\frac{2}{kq}}-\tilde{b}u$, where 
 $\displaystyle\tilde{b}=kq(kq-2)$ and $\displaystyle\tilde{a}=\gamma kq(kq-2)$. The
 boundary conditions on the solutions of interest are: $\lim_{\tau\longrightarrow\infty}u(\tau)=0 $ and
 $u'(0)=0$.
There exist three points $\xi_{0}$, $\xi_{1}$ and $\xi_{2}$ in
$(0,\infty)$ such that:
\begin{eqnarray*}
 \int_{u=0}^{\xi_{0}}f(u)du&=& 0\mbox{;\hspace{.5in}}\int_{u=0}^{v}f(u)du< 0 \mbox{ for $v<\xi_{0}\quad$ and } \int_{u=0}^{v}f(u)du> 0 \mbox{ for $v>\xi_{0}$}\\
\mbox{\vspace{2in}}\\
 f(\xi_{1})&=& 0\mbox{;\hspace{.5in}} f(u)<0 \quad \mbox{  if $\quad u<\xi_{1}\quad$  and  }\quad f(u)>0 \mbox{   if  $u>\xi_{1}$}\\
 f'(\xi_{2})&=&0 \mbox{;\hspace{.5in}} f'(u)<0 \quad \mbox{  if $\quad u<\xi_{2}\quad$  and  }\quad f'(u)>0 \mbox{   if  $u>\xi_{2}$}\\
\end{eqnarray*}
\begin{figure}[hbt]\centering
 \begin{tabular}{c}
 \includegraphics[angle=270,width=10cm]{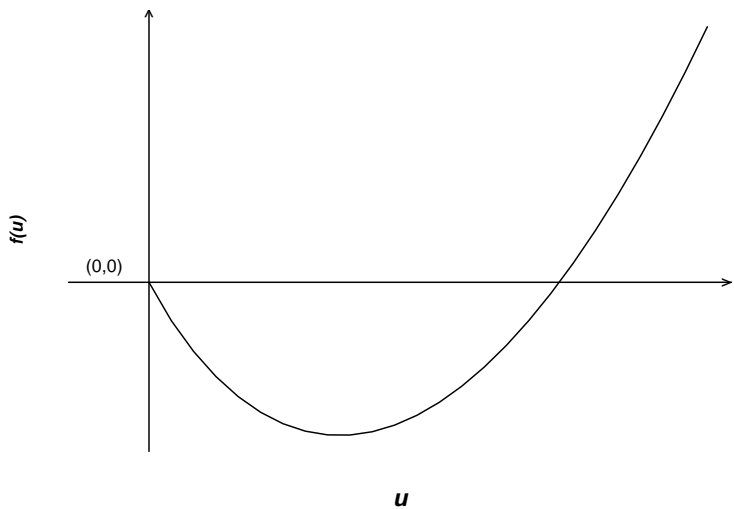}\\[-5pt]
 \end{tabular}
 \caption{The function $f(u)$}
 \end{figure}

Following [Kwo], let us consider $u$ as a function of the initial
value $\alpha$ and $\tau$, and study, in stead of the boundary value
problem mentioned above, the following initial value problem:
\begin{eqnarray}\label{3.15}
u'' +\coth\tau u'+f(u)=0\\
u(0)=\alpha>0, \qquad u'(0)=0\nonumber
\end{eqnarray}

We first divide the set of solutions into three mutually disjoint
subsets, namely:

\begin{enumerate}

\item Solutions that have a zero at some finite $\tau$. We call the corresponding set of
initial values $N$. We denote the finite zero as $b(\alpha)$.

\item Positive solutions that satisfy
$\lim_{\tau\rightarrow\infty}u(\tau)=0$. We call the set of initial
values $G$ in this case.

\item Solutions that remain positive and do not belong to case 2.
We let $P$ denote the set of initial values for such solutions.

\end{enumerate}

For a particular solution $u\in G\cup N$, we let $\tau_{1}$ denote
the zero of $f(u)$, that is to say, $u(\tau_{1})=\xi_{1}$ (it is
possible to define this point uniquely because, as we will show
momentarily, solutions $u\in G\cup N$ are monotone). Our subsequent
results rely heavily on Sturm's comparison theorem (as mentioned in
[lemma 1, [Kwo]] and also in chapter X, page 229 of [Inc]) and a few
important corollaries that we state below. 

Consider two second order differential equations:
\begin{eqnarray}\label{3.16}
U''(x)+f(x)U'(x)+g(x)U(x)=0, \qquad x\in(a,b)
\end{eqnarray}
\begin{eqnarray}\label{3.17}
V''(x)+f(x)V'(x)+G(x)V(x)=0, \qquad x\in(a,b)
\end{eqnarray}

Suppose that (\ref{3.16}) has solutions that do not vanish in a
neighborhood of point $b$. Then the largest neighborhood of $b$,
$(c, b)$, on which there exists a solution of (\ref{3.16}) without
any zero, is called the disconjugacy interval of (\ref{3.16}).
Sturm's theorem implies that no non-trivial solution can have more
than one zero in $(c, b)$. A corollary (lemma 6, [Kwo]) of Sturm's
theorem is: if $(c, \infty)$ is the discongugacy interval of
(\ref{3.16}), as defined above, then every solution of (\ref{3.16})
with a zero in $(c, \infty)$ is unboounded. We also have another
very useful corollary (lemma 3, [Kwo]) of Sturm's theorem: if the
equations (\ref{3.16}) and (\ref{3.17}) satisfy the comparison
condition $G(x)\geq g(x)$, $U$ is not identically equal to $V$ in
any neighborhood of $b$ and there exists a solution $V$ of
(\ref{3.17}) with a largest zero at $\rho\in(a,b)$, then the
disconjugacy interval of (\ref{3.16}) is a strict superset of
$(\rho,b)$.

We are now ready to state and prove our results. But first let us
briefly outline our strategy in a few steps, since the proof of
uniqueness is rather involved:
\begin{enumerate}
\item The first two lemmas state well-known facts about the
structure of the sets $N$, $P$ and $G$. As we increase $\alpha$ from
$0$ we first have solutions in $P$. Since the arguments are exactly
similar to those used for the Euclidean case in [Kwo], we refer to
the relevant lemmas in [Kwo], in stead of reiterating the proofs.

\item Next we study the variation $w$ of a solution $u\in G\cup N$ with
respect to its initial value. 
The proof of uniqueness depends crucially on the properties of $w$.
If, for $\alpha\in G$
$\lim_{\tau\longrightarrow\infty}w(\alpha,\tau)=-\infty$, then a
right neighborhood of $\alpha$ belongs to $N$. Also, if $\alpha\in
N$ and $w(\alpha,b(\alpha))<0$, then a neighborhood of $\alpha$
belongs to $N$ as well. Suppose these hypotheses are indeed true. As
we continuously increase $\alpha$, we will first have solutions in
$P$. The right boundary point will belong to $G$. 
A right neighborhood of the corresponding $\alpha$ will be in $N$.
Then, if for all $\alpha\in N$, $w(\alpha,b(\alpha))<0$, we will
continue to remain in $N$ as we increase $\alpha$ further. Thus the
proof of uniqueness of the ground state will be complete. Hence we
just need to prove that for $\alpha\in G$,
$\lim_{\tau\longrightarrow\infty}w(\alpha,\tau)=-\infty$, while for
$\alpha\in N$,  $w(\alpha,b(\alpha))<0$. In fact, if we can prove
that $w$ has only one zero for initial values in $G\cup N$ and $w$
is unbounded for initial values in $G$, uniqueness will be
guaranteed. Initial values satisfying these two conditions are
called strict admissible.

\item To prove that $w$ can have no more than one zero and that it
is unbounded, we construct a comparison function $v$ for $w$
The zero of $w$ is then shown to belong to the disconjugacy interval
of the differential equation satisfied by $w$, which in turn implies
unboundedness of $w$. The idea of constructing a comparison function
like this was used in [Kwo] to prove uniqueness of positive
solutions of a semi-linear Poisson equation in a bounded or
unbounded annular region in $\mathbb{R}^{n}$, for $n>1$. It is in
this crucial step, right after lemma $5.5$ in this paper, that our
proof of uniqueness differs from that of [Kwo]. This happens because
we are dealing with a semi-linear Poisson equation on the hyperbolic
plane $\mathbb{H}^{2}$. The difference in geometry manifests itself
in the form of the comparison function and, more importantly, in the
subsequent analysis. Proofs of lemma $5.6$ through lemma $5.8$ are
thus specific to the hyperbolic case. As we go along we point out
these differences  in detail.

\end{enumerate}

The main result of this section is:
\begin{theorem}
The initial value $\alpha\in G\cup N$ is strictly admissible.
\end{theorem}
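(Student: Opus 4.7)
The plan is to analyze the variation $w := \partial u/\partial \alpha$, which by differentiating (\ref{3.15}) with respect to $\alpha$ satisfies the linear ODE
$$w'' + \coth\tau \, w' + f'(u)\, w = 0, \qquad w(0)=1,\ w'(0)=0.$$
Strict admissibility amounts to two claims: (i) $w$ has at most one zero on its domain of existence for $\alpha \in G\cup N$, and (ii) $w$ is unbounded when $\alpha \in G$. I would first record that for $\alpha \in G\cup N$ the solution $u$ is monotone decreasing, so we can parametrize by where $u$ crosses the distinguished values $\xi_1$ and $\xi_2$: on the inner region $u>\xi_2$ we have $f'(u)>0$ and on the outer region $u<\xi_2$ we have $f'(u)<0$. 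This sign structure is exactly what Sturm comparison needs, but the key work is to select the right comparison function.

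The central construction, adapting Kwong's strategy to $\mathbb{H}^2$, is a comparison function $v$ built from $u$ and $u'$ that satisfies an ODE of the same form as the $w$-equation but with $f'(u)$ replaced by $f(u)/u$. In the Euclidean case this is $v = ru' + \lambda u$, whose structural identity comes from the Pohozaev identity; on $\mathbb{H}^2$ the correct analogue is $v = \sinh\tau\, u' + \lambda u$ (or a suitable weighted version built on the hyperbolic conformal factor), tuned so that $v$ has a known zero at some intermediate $\tau$. The comparison $f(u)/u$ versus $f'(u)$ then reduces to verifying that $f(u)/u - f'(u)$ has a definite sign on the relevant range, which is immediate from the explicit form $f(u)=\tilde a u^{1+2/(kq)} - \tilde b u$: the extra convexity of the nonlinear term gives $f(u)/u < f'(u)$ precisely where $u>\xi_1$ (and the reverse inequality where $u<\xi_1$), which matches the geometry of $w$ and $v$ on the two sides of $\tau_1$.

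With $v$ in hand, I would deploy Sturm's comparison theorem (lemma 3 of Kwo, quoted in the excerpt) together with the disconjugacy-interval corollary (lemma 6 of Kwo) in two steps: on $[0,\tau_1]$ compare $w$ with $v$ to force any zero of $w$ to lie strictly beyond $\tau_1$; on $[\tau_1,\infty)$ reverse the comparison direction to confine any zero of $w$ to the disconjugacy interval of the $w$-equation. This simultaneously yields (i) at most one zero, and (ii) by the disconjugacy corollary, that the zero — if present — forces $w$ to be unbounded as $\tau\to\infty$, which is the required conclusion for $\alpha\in G$. For $\alpha\in N$, the same machinery pins the sign of $w$ at $b(\alpha)$: since $w$ starts positive and has at most one sign change, and the comparison forces that change to occur before $b(\alpha)$, one obtains $w(\alpha,b(\alpha))<0$.

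The main obstacle I anticipate is step two: fixing the precise weight in the comparison function $v$. The Pohozaev-style identity on $\mathbb{H}^2$ that makes $v$'s equation close up has extra curvature terms (coming from replacing $r^{n-1}$ by $\sinh\tau$ and differentiating $\coth\tau$), and these must cancel against the choice of $\lambda$ so that the residual coefficient is exactly $f(u)/u$. Once the identity is balanced, verifying that $v$ has the correct zero structure on $(0,\infty)$ — in particular that $v$ vanishes exactly once and the location of that zero interleaves correctly with $\tau_1$ and any putative zero of $w$ — is the detailed work carried out in the subsequent lemmas, and is where the hyperbolic case genuinely diverges from the Euclidean analysis of Kwong because $\sinh\tau$ grows exponentially rather than polynomially, affecting both the integrability of cross-terms and the asymptotic matching as $\tau\to\infty$.
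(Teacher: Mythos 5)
Your overall architecture matches the paper's: study $w=\partial u/\partial\alpha$, build a comparison function of the form $v_\beta=\sinh\tau\,u'+\beta u$, run Sturm comparison on the two sides of the first zero of $w$, and invoke the disconjugacy corollary to get unboundedness for $\alpha\in G$. But the mechanism you give for why $v$ majorizes $w$ is not the one that works, and the step you wave at is the heart of the proof. A direct computation using $u''=-\coth\tau\,u'-f(u)$ shows that $v_\beta$ does \emph{not} satisfy the $w$-equation with $f'(u)$ replaced by $f(u)/u$; it satisfies the \emph{inhomogeneous} equation
$$v''+\coth\tau\,v'+f'(u)\,v=\Phi(\tau),\qquad \Phi=\beta\bigl(uf'(u)-f(u)\bigr)-2\cosh\tau\,f(u),$$
i.e.\ the same coefficient $f'(u)$ plus a forcing term. (The sign of $f(u)/u-f'(u)$ is what underlies the separate preliminary fact that $w$ must change sign \emph{before} the point $\tau_1$ where $u=\xi_1$ --- note this is the opposite of your claim that the comparison pushes the zero of $w$ beyond $\tau_1$.) Written homogeneously, $v$ has coefficient $f'(u)-\Phi/v$, so $v$ is a Sturm majorant of $w$ exactly where $\Phi/v\le 0$. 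The crux of the argument is therefore to arrange that $\Phi$ and $v_\beta$ change sign at the \emph{same} point: one must show (i) for $0<\beta<\bar\beta$ the forcing term $\Phi$ has a unique zero $\sigma_\beta\in(0,\tau_1)$, negative before and positive after, depending continuously and decreasingly on $\beta$ (this uses the hyperbolic estimate $(\sinh\tau\tanh\tau f(u))'<2\sinh\tau f(u)$ against the growth rate of $-\sinh\tau\,u'$); (ii) $v_\beta$ has a unique zero $\rho_\beta$, which follows from monotonicity of the auxiliary function $\theta=-\sinh\tau\,u'/u$, with $\rho_\beta$ increasing continuously from $0$; and (iii) by an intermediate value argument there is a $\beta_0$ with $\rho_{\beta_0}=\sigma_{\beta_0}$. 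Only for this tuned $\beta_0$ does $-\Phi/v\ge 0$ hold on both $(0,\rho)$ and $(\rho,\infty)$, making $v$ a strict Sturm majorant of $w$ throughout and yielding simultaneously the single-zero property and, via the disconjugacy corollary, the unboundedness for $\alpha\in G$ and the sign $w(b(\alpha))<0$ for $\alpha\in N$. Without this matching lemma (the content of the paper's Lemmas 5.6--5.8, and the place where the hyperbolic geometry genuinely enters through the $\cosh\tau$ factor in $\Phi$), the added term $-\Phi/v$ has no definite sign and the Sturm step does not go through.
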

\vspace{.2in}

Let us construct an ``energy" function corresponding to
(\ref{3.15}):
\begin{eqnarray*}
E(\tau) =
\frac{u'^{2}(\tau)}{2}+\frac{\tilde{a}u^{2+\frac{2}{kq}}}{2+\frac{2}{kq}}-\frac{\tilde{b}u^{2}}{2}
\end{eqnarray*}

It is readily seen that $\displaystyle E'(\tau)=-\coth\tau
u'^{2}(\tau)\leq 0$. Thus $E$ is a non-increasing function of
$\tau$.

\begin{lemma}
The set $(0,\xi_{0}]$ of initial values belongs to the set P. [lemma
8, [Kwo]]
\end{lemma}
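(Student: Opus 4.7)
The plan is to work with the energy functional
\begin{equation*}
E(\tau) \;=\; \tfrac{1}{2}u'(\tau)^{2} + F(u(\tau)), \qquad F(u):=\int_{0}^{u}f(s)\,ds \;=\; \frac{\tilde{a}\,u^{2+\frac{2}{kq}}}{2+\frac{2}{kq}} - \frac{\tilde{b}\,u^{2}}{2},
\end{equation*}
which satisfies $E'(\tau) = -\coth\tau\cdot u'(\tau)^{2}\le 0$ along any solution of (\ref{3.15}). This monotonicity, combined with the initial data $u'(0)=0$, gives $E(\tau)\le E(0)=F(\alpha)$; and the defining property of $\xi_{0}$ tells us $F(\alpha)\le 0$ whenever $\alpha\in(0,\xi_{0}]$, with strict inequality for $\alpha<\xi_{0}$. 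Everything to follow is driven by this single bound on $E$.

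First I would verify strict positivity of $u$ for all $\tau>0$, which rules out $\alpha\in N$. If $u$ had a first zero at some finite $\tau_{0}>0$, then $E(\tau_{0})=\tfrac{1}{2}u'(\tau_{0})^{2}\ge 0$, while $E(\tau_{0})\le E(0)\le 0$; hence $u'(\tau_{0})=0$. At $\tau_{0}>0$ the ODE is regular (the only singular coefficient $\coth\tau$ is finite), and $f(0)=0$, so ODE uniqueness forces $u\equiv 0$, contradicting $u(0)=\alpha>0$.

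Next I would exclude $\alpha\in G$, i.e., show $u(\tau)\not\to 0$ as $\tau\to\infty$. For $\alpha\in(0,\xi_{0})$ this is immediate: for every $\tau$,
\begin{equation*}
F(u(\tau)) \;\le\; E(\tau) \;\le\; E(0)\;=\;F(\alpha)\;<\;0,
\end{equation*}
whereas $F(u)\to 0$ as $u\to 0^{+}$, a contradiction. For the endpoint $\alpha=\xi_{0}$ one has $F(\alpha)=0$, and the strict inequality needs an extra nudge: I would show that $E(\tau^{\ast})<0$ for some small $\tau^{\ast}>0$ and then run the same argument from $\tau^{\ast}$ onward. Since $\xi_{0}>\xi_{1}$ (because $F$ is negative and decreasing on $(0,\xi_{1}]$ and only returns to $0$ at $\xi_{0}$), we have $f(\xi_{0})>0$. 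The regularized equation near $\tau=0$ (the radial Laplacian on $\mathbb{H}^{2}$ behaves like the Euclidean one to leading order) gives $u''(0)=-f(\xi_{0})/2<0$, hence $u'(\tau)\sim -\tfrac{f(\xi_{0})}{2}\tau$ and $E'(\tau)\sim -\tfrac{f(\xi_{0})^{2}}{4}\tau<0$ as $\tau\to 0^{+}$. So $E$ is strictly decreasing off $\tau=0$, yielding the required $\tau^{\ast}$.

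The main obstacle is the borderline case $\alpha=\xi_{0}$, where the simple energy-monotonicity bound is not strict and one must extract the strict decrease from a short-time expansion of the solution at the regular singular point $\tau=0$; the interior and large-$\alpha$ cases reduce to direct contradictions with $F(u)\to 0$. Once both exclusions are in place, $\alpha\in(0,\xi_{0}]$ belongs to $P$, which is the claim of the lemma.
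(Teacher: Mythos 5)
Your proof is correct and is essentially the argument the paper intends: the paper introduces the energy function $E(\tau)$ with $E'(\tau)=-\coth\tau\,u'^{2}\le 0$ immediately before this lemma and then defers to [Kwo, Lemma 8], and your energy-monotonicity argument (ruling out a finite zero via $E(\tau_{0})\ge 0$ versus $E(\tau_{0})\le F(\alpha)\le 0$, and ruling out decay to zero via $F(u(\tau))\le F(\alpha)<0$) is exactly that standard route, correctly adapted to the hyperbolic radial operator. Your extra care at the endpoint $\alpha=\xi_{0}$, using $u''(0)=-f(\xi_{0})/2<0$ to get strict decrease of $E$ near $\tau=0$, is a genuine and necessary detail that the paper's citation glosses over.
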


 For solutions in $N$, the function $E$ decreases to a positive
constant while for solutions in $G$, $E(\infty)=0$. This fact leads
us to the following lemma:
\begin{lemma}

If $u\in G\cup N$, then $u'(\tau)<0$ in $(0,b(\alpha))$ (if $u\in
b(\alpha)$) or $(0,\infty)$ (if $u\in G$). [lemma 11, [Kwo]]
\end{lemma}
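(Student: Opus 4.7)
The plan is to argue by contradiction using the Lyapunov ``energy'' function $E(\tau) = \tfrac{1}{2}u'(\tau)^2 + F(u(\tau))$, where $F(u) = \int_0^u f(s)\,ds = \tfrac{\tilde a\, u^{2+2/kq}}{2+2/kq} - \tfrac{\tilde b\, u^2}{2}$, together with the structure of $F$ recorded just before the lemma: $F(0)=0$, $F$ is strictly decreasing on $(0,\xi_1)$ and strictly increasing on $(\xi_1,\infty)$ with $F(\xi_0)=0$, so $F(u)<0$ on $(0,\xi_0)\setminus\{0\}$ and $F(u)>0$ on $(\xi_0,\infty)$. Note that from Lemma~5.2 any $\alpha\in G\cup N$ satisfies $\alpha>\xi_0$, hence $F(\alpha)>0$ and $f(\alpha)>0$. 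First I would rewrite the ODE in divergence form $(\sinh\tau\, u')'=-\sinh\tau\, f(u)$ and integrate from $0$ to obtain
\begin{equation*}
\sinh\tau\, u'(\tau)=-\int_0^\tau \sinh s\, f(u(s))\,ds.
\end{equation*}
Since $u$ is continuous and $f(\alpha)>0$, the right-hand side is strictly negative for all sufficiently small $\tau>0$; so $u'(\tau)<0$ on an initial interval $(0,\delta)$.

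Now suppose, towards a contradiction, that there exists a smallest $\tau_*>0$ with $u'(\tau_*)=0$; then $u'<0$ on $(0,\tau_*)$, so $u'$ is non-decreasing at $\tau_*$, giving $u''(\tau_*)\geq 0$. Evaluating the ODE at $\tau_*$ yields $u''(\tau_*)=-f(u(\tau_*))$, hence $f(u(\tau_*))\leq 0$, which forces $u(\tau_*)\leq \xi_1$. Combining with the shape of $F$, we obtain
\begin{equation*}
E(\tau_*)=F(u(\tau_*))\leq 0, \qquad \text{with equality iff } u(\tau_*)=0.
\end{equation*}
Recall $E'(\tau)=-\coth\tau\, u'(\tau)^2\leq 0$, with strict inequality wherever $u'\neq 0$, so $E$ is strictly decreasing on $(0,\tau_*)$ and non-increasing thereafter.

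In the case $\alpha\in N$ I would compare with the value at the finite zero: $E(b(\alpha))=\tfrac{1}{2}u'(b(\alpha))^2\geq 0$, and monotonicity of $E$ gives $E(\tau_*)\geq E(b(\alpha))\geq 0$. Together with $E(\tau_*)\leq 0$ this forces $E(\tau_*)=0$ and hence $u(\tau_*)=0$, contradicting $\tau_*<b(\alpha)$. For the case $\alpha\in G$ the analogous lower bound on $E$ is the technical heart of the argument: I would show $E(\tau)\geq 0$ for all $\tau\geq 0$ by proving $\lim_{\tau\to\infty}E(\tau)=0$. Since $E$ is non-increasing and bounded below by $F(\xi_1)$, the limit $E_\infty:=\lim_{\tau\to\infty}E(\tau)$ exists; since $u(\tau)\to 0$ we have $F(u(\tau))\to 0$, so $u'(\tau)^2\to 2E_\infty$. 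If $E_\infty>0$ then $|u'|$ would be bounded below by a positive constant for large $\tau$, forcing $u$ to be unbounded in absolute value and contradicting $u\to 0$. Hence $E_\infty=0$, and monotonicity gives $E\geq 0$ everywhere. Then $E(\tau_*)\leq 0$ combined with $E(\tau_*)\geq 0$ again yields $u(\tau_*)=0$, contradicting positivity of $u$ on $[0,\infty)$.

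The main obstacle is the asymptotic argument in the $G$ case, where (unlike the Euclidean case on a bounded domain) we must rule out a positive limiting energy; the observation that $u\to 0$ together with the existence of $E_\infty$ forces $u'\to 0$, so $E_\infty=0$, is precisely the ingredient that makes the hyperbolic damping $\coth\tau$ cooperate. Once that is in place, the monotonicity of $u$ follows uniformly in both subclasses $G$ and $N$ from the same contradiction: a first critical point $\tau_*$ would necessarily lie at $u(\tau_*)\leq\xi_1$ with $E(\tau_*)=0$, which is incompatible with $u(\tau_*)>0$.
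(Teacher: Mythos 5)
Your proof is correct and takes essentially the same route as the paper: the paper disposes of this lemma by citing Kwong's Lemma 11 together with precisely your energy function $E(\tau)=\tfrac{1}{2}u'(\tau)^2+F(u(\tau))$, noting that $E$ is non-increasing with $E$ tending to a nonnegative value at $b(\alpha)$ for $u\in N$ and $E(\infty)=0$ for $u\in G$, which is exactly the contradiction mechanism you implement at a putative first critical point $\tau_*$. Your detailed handling of the $G$-case asymptotics (existence of $E_\infty$ plus $u\to 0$ forcing $u'\to 0$, hence $E_\infty=0$) is a correct fleshing-out of the step the paper leaves to the citation.
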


The fact that the sets $N$ and $P$ are open subsets of $(0,\infty)$
[lemma 13, [Kwo]; lemma 1.1, [Ber]] is crucial but easy to observe.

We concern ourselves only with solutions that are either in $G$  or
in $N$ . Let us define: $\displaystyle
w=w(\tau,\alpha)=\left.\frac{\partial
u}{\partial\alpha}\right|_{\tau,\alpha}$. We study the function $w$
for such solutions. First of all let us note that $w=0$ means two
nearby solutions (i.e. solutions having nearby initial values) can
intersect.

      Evidently $w$ satisfies the following equation (the derivatives are taken with respect to $\tau$):
\begin{eqnarray}\label{3.18}
w''+\coth\tau w'+f'(u)w=0\\
w(0)=1, \qquad w'(0)=0\nonumber
\end{eqnarray}
\begin{lemma}
For $u\in G\cup N$, $w$ has to change sign before $\xi_{1}$. [lemma
17, [Kwo]]
\end{lemma}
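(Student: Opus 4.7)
The plan is to argue by contradiction: assume $w(\tau) > 0$ for every $\tau \in [0,\tau_1]$, where $\tau_1$ is the unique point with $u(\tau_1) = \xi_1$ (unique by Lemma 5.2). Because on $(0,\tau_1)$ one has $u > \xi_1 > \xi_2$, the coefficient $f'(u)$ in the $w$--equation is strictly positive throughout, which will let us apply the Sturm corollaries stated in the text.

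The first ingredient is a Pohozaev-style identity. Multiplying the $u$--equation by $\sinh\tau\,w$, equation (3.18) for $w$ by $\sinh\tau\,u$, subtracting, and noting that the resulting first-order combination is a perfect derivative, one gets
$$\bigl(\sinh\tau\,(u'w - u w')\bigr)' \;=\; -\sinh\tau\,h(u)\,w, \qquad h(u) := f(u) - u f'(u).$$
For the $f$ in (3.15) a direct computation gives $h(u) = -(2\tilde a/kq)\,u^{1+2/(kq)} < 0$ for $u > 0$. Integrating from $0$ to $\tau_1$ (the boundary term at $0$ vanishes because $\sinh 0 = 0$), the contradiction hypothesis $w > 0$ yields
$$\sinh\tau_1\bigl(u'(\tau_1)w(\tau_1) - \xi_1\, w'(\tau_1)\bigr) \;>\; 0,$$
and since $u'(\tau_1) < 0$ and $w(\tau_1) > 0$, this forces the quantitative bound $w'(\tau_1) < u'(\tau_1)w(\tau_1)/\xi_1 < 0$.

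The second ingredient is comparison with $v := u'$, which (by differentiating the ODE for $u$) satisfies
$$v'' + \coth\tau\,v' + \bigl[f'(u) - \operatorname{csch}^2\tau\bigr] v \;=\; 0.$$
The coefficient of the $w$--equation strictly dominates that of the $v$--equation on $(0,\tau_1]$, so by the disconjugacy-interval corollary of Sturm's theorem recalled before the lemma, the disconjugacy interval of the $w$--equation is a strict subset of that of the $v$--equation. A direct Wronskian computation shows that $W(\tau) := \sinh\tau\,(vw' - wv')$ satisfies $W'(\tau) = -v(\tau)w(\tau)/\sinh\tau$ with $W(0^+) = 0$; with $v < 0$ and $w > 0$ on $(0,\tau_1]$, this gives $W(\tau_1) > 0$. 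Using $f(\xi_1) = 0$ to get $v'(\tau_1) = -\coth\tau_1\,v(\tau_1)$, this inequality simplifies to $w'(\tau_1) + \coth\tau_1\,w(\tau_1) < 0$.

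The final step combines the two consequences with the strict Sturm domination. The two bounds on $w'(\tau_1)$, together with the positivity of $v = u'$'s only zero at the origin and the strictly positive gap $\operatorname{csch}^2\tau$ between the two ODE coefficients, force a zero of $w$ inside $(0,\tau_1)$; otherwise $w$ would be a positive solution of (3.18) whose disconjugacy interval strictly contained $(0,\tau_1]$, contradicting Sturm comparison with $v$. The main obstacle I expect is precisely this last step: because $v = u'$ has only the endpoint zero at $\tau = 0$, Sturm comparison does not directly produce an interior zero of $w$, and one must invoke the strict coefficient gap and the Wronskian identity above in tandem to push the zero into the open interval, which is the substantive difference from Kwong's Euclidean argument.
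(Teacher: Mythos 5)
Your two integral identities are computed correctly: with $P(\tau)=\sinh\tau\,(u'w-uw')$ one has $P'=\sinh\tau\,(uf'(u)-f(u))\,w$ and $uf'(u)-f(u)=\tfrac{2\tilde a}{kq}u^{1+2/kq}>0$, and the Wronskian of $w$ against $v=u'$ behaves as you say. But note what they deliver under the contradiction hypothesis $w>0$ on $[0,\tau_1]$: the first gives $w'(\tau_1)<u'(\tau_1)w(\tau_1)/\xi_1<0$ and the second gives $w'(\tau_1)<-\coth\tau_1\,w(\tau_1)<0$. These are two \emph{upper} bounds on $w'(\tau_1)$, both negative, and neither separately nor jointly do they conflict with the standing assumption that $w$ is strictly positive on the closed interval $[0,\tau_1]$ --- a positive decreasing $w$ with steep negative slope at $\tau_1$ satisfies everything you have derived. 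So no contradiction has actually been reached, and the lemma is not proved.

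The final paragraph does not close this gap, and the Sturm/disconjugacy language there is not a valid deduction. Sturm comparison between the $w$-equation (coefficient $f'(u)$) and the $v$-equation (coefficient $f'(u)-1/\sinh^{2}\tau$) would place a zero of $w$ between two consecutive zeros of $v$; since $v=u'$ vanishes only at $\tau=0$, the theorem yields nothing. The disconjugacy corollary quoted before the lemma (lemma 3 of [Kwo]) requires a solution of the \emph{larger}-coefficient equation with a known largest zero in order to enlarge the disconjugacy interval of the smaller one; under your hypothesis $w$ has no zero, so it cannot be invoked, and the clause ``otherwise $w$ would be a positive solution of (3.18) whose disconjugacy interval strictly contained $(0,\tau_1]$'' is not a contradiction of anything --- the existence of a positive solution on an interval is exactly what disconjugacy of that interval means. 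You candidly flag this step as the expected obstacle, but it is not a technical wrinkle: it is the entire content of the statement, and some further input is required to force the zero into $(0,\tau_1)$. For what it is worth, the paper itself offers no proof here either; it cites Lemma 17 of [Kwo] verbatim and asserts that the Euclidean argument carries over, so the honest comparison is against Kwong's proof, which rests on more than the two inequalities at $\tau_1$ that you obtain.
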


Following Kwong, we call the initial value $\alpha\in G$ strictly
admissible if the corresponding $w(\alpha,\tau)$ has only one zero
in $(0,\infty)$ and
$\lim_{\tau\longrightarrow\infty}w(\alpha,\tau)=-\infty$. We call
the initial value $\alpha\in N$ strictly admissible if the
corresponding $w(\alpha,\tau)$ has only one zero in $(0,\infty)$ and
$w(\alpha,b(\alpha))<0$.

It is easy to see that if for a particular $\alpha\in N$,
$\displaystyle w(b(\alpha))=\frac{\partial
u}{\partial\alpha}(b(\alpha),\alpha)<0$, then in a right
neighborhood of $\alpha$, $b(\alpha)$ is a strictly decreasing
function of $\alpha$ and thus that neighborhood belongs to $N$.

\begin{lemma}
If for $\alpha\in G$,
$\lim_{\tau\longrightarrow\infty}w(\alpha,\tau)=-\infty$, in
particular if $w(\alpha,\tau)$ is strictly admissible, then there
exists a right neighborhood of $\alpha$ that belongs to $N$. [lemma
19, [Kwo]]
\end{lemma}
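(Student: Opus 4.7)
The plan is to show that for every $\delta > 0$ sufficiently small, the perturbed solution $u(\alpha + \delta, \cdot)$ develops a zero at some finite time, placing $\alpha + \delta$ in $N$. The guiding heuristic is the first-order expansion
\[
u(\alpha + \delta, \tau) \approx u(\alpha, \tau) + \delta\, w(\alpha, \tau);
\]
since $u(\alpha, \tau)$ decreases monotonically to $0$ (by the preceding monotonicity lemma for solutions in $G$) while $w(\alpha, \tau) \to -\infty$ by hypothesis, the right-hand side becomes strictly negative at sufficiently large times. The task is to make this quantitative and uniform for $\delta$ in a full right neighborhood of $0$.

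Concretely, given small $\delta > 0$, I would choose a shooting time $\tau^* = \tau^*(\delta)$ large enough that $u(\alpha, \tau^*) < \delta$ and $w(\alpha, \tau^*) < -3$; both conditions hold once $\tau^*$ exceeds a threshold depending on $\delta$, and together they ensure $u(\alpha, \tau^*) + \tfrac{1}{2}\delta\, w(\alpha, \tau^*) < 0$. By smooth dependence on initial data for the IVP satisfied by $u$ and the linearized IVP satisfied by $w$, the map $\alpha' \mapsto w(\alpha', \tau^*)$ is continuous, so the identity
\[
u(\alpha + \delta, \tau^*) = u(\alpha, \tau^*) + \int_0^\delta w(\alpha + s, \tau^*)\, ds
\]
combined with the bound $w(\alpha + s, \tau^*) \le \tfrac{1}{2} w(\alpha, \tau^*)$ for $s \in [0, \delta]$ (valid for $\delta$ small enough depending on $\tau^*$) yields $u(\alpha + \delta, \tau^*) < 0$. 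Since $u(\alpha + \delta, 0) = \alpha + \delta > 0$, the intermediate value theorem then places $\alpha + \delta$ in $N$.

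The central difficulty is that $\tau^*(\delta) \to \infty$ as $\delta \to 0$, while the modulus of continuity of $\alpha' \mapsto w(\alpha', \tau^*)$ at $\alpha$ may degrade as $\tau^*$ grows; a priori the range of admissible $\delta$ for a given $\tau^*$ could shrink faster than $\tau^*$ enlarges, spoiling the union argument. To secure a genuine right neighborhood one must quantify this modulus. I would do so by differentiating the variational equation for $w$ once more in $\alpha$ to obtain an ODE for $u_{\alpha\alpha}$ with source terms built from $w$ and $u$, and then apply a Gronwall-type estimate, using the strictly admissible asymptotics of $u$ and $w$ near the zero equilibrium (where $f'(u) \to -\tilde{b} < 0$), to control $|u_{\alpha\alpha}(\alpha, \tau^*)|$ on the relevant time window. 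This will show that the $O(\delta^2)$ Taylor remainder is dominated by the negative principal term $\delta\, w(\alpha, \tau^*)$ for all sufficiently small $\delta$. This balancing step is the technical heart of the argument; once it is in place, the three preceding steps conclude.
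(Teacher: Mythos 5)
You should first be aware that the paper does not actually prove this statement: it is one of the lemmas imported from Kwong's paper with the bare citation [lemma 19, [Kwo]], the author having announced that for these structural lemmas "the arguments are exactly similar to those used for the Euclidean case." So the comparison is really with the argument that citation points to. Your first-order heuristic is the right one, and you are commendably honest in flagging the circular dependence between $\delta$ and $\tau^{*}(\delta)$ as the crux --- but that crux is precisely what you do not resolve. The passage beginning "I would do so by differentiating the variational equation once more\dots" is a plan, not a proof, and you yourself call it "the technical heart of the argument." Worse, it is not routine to execute as described: a generic Gronwall bound on $u_{\alpha\alpha}$ over $[0,\tau^{*}]$ yields growth $e^{C\tau^{*}}$ with a constant $C$ unrelated to the dynamics, while your choice of $\tau^{*}$ forces $\tau^{*}(\delta)\to\infty$ like a multiple of $\ln(1/\delta)$; whether $\delta^{2}e^{C\tau^{*}}$ is dominated by the principal term $\delta\,|w(\alpha,\tau^{*})|$ then depends on comparing $C$ with the saddle exponents $\lambda_{\pm}=\tfrac{1}{2}\bigl(-1\pm\sqrt{1+4\tilde{b}}\bigr)$ of the linearization at $u=0$. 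The exponent bookkeeping does close at leading order, but only if you first establish the sharp rates $u\sim e^{\lambda_{-}\tau}$, $w\sim e^{\lambda_{+}\tau}$, $|u_{\alpha\alpha}|\lesssim e^{2\lambda_{+}\tau}$, i.e.\ essentially the stable/unstable manifold analysis you are implicitly assuming. So there is a genuine gap, and it sits exactly where you said the heart of the argument is.

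The standard route (and, as far as I can tell, the one behind the cited Lemma 19 of [Kwo]) sidesteps the difficulty entirely by fixing the observation time \emph{before} choosing $\delta$. Since $u(\alpha,\cdot)\downarrow 0$ and $w(\alpha,\cdot)\to-\infty$, one can pick a single $T$, depending only on $\alpha$, with $u(\alpha,T)<\xi_{2}$, $w(\alpha,T)<0$ and $w'(\alpha,T)<0$. For $\delta$ small depending only on this fixed $T$ (so no circularity), the difference $z=u(\alpha,\cdot)-u(\alpha+\delta,\cdot)$ satisfies $z(T)>0$ and $z'(T)>0$, and obeys $z''+\coth\tau\,z'+c(\tau)z=0$ with $c(\tau)=\bigl(f(u(\alpha,\tau))-f(u(\alpha+\delta,\tau))\bigr)/z(\tau)=f'(\eta)<0$ as long as both solutions lie below $\xi_{2}$. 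Hence $(\sinh\tau\,z')'=-\sinh\tau\,c\,z>0$ while $z>0$, so $z$ is increasing; since $u(\alpha,\cdot)\to 0$, this forces $u(\alpha+\delta,\tau)\le u(\alpha,\tau)-z(T)$ to become negative, i.e.\ $\alpha+\delta\in N$. This converts the pointwise information at the one time $T$ into global separation with no second-order estimates and no moving target time; I would recommend rebuilding your proof around such a comparison rather than trying to complete the Taylor-remainder estimate.
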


We now need to prove that every initial value $\alpha\in G\cup N$ is
strictly admissible. The strategy is to construct a comparison
function $v(\tau)$ (to be compared with $w$), which has the
following properties:
\begin{enumerate}
\item $v(\tau)$ has only one zero in $(0,\infty)$.

 and

 \item $v(\tau)$ is a strict Sturm majorant
of $w(\alpha,\tau)$ in both $(0,\rho)$ and $(\rho,\infty)$, where
$\rho$ is the first zero of $w(\alpha,\tau)$.
\end{enumerate}

If we are able to construct such a function, then by property (2)
the zero of $v$ occurs before that of $w$ and by property (1) $w$
cannot have another zero in $(0,b(\alpha))$. Here $b(\alpha)$ is the
zero of the solution $u\in G\cup N$. If $u\in G$ then $b(\alpha)$ is
to be interpreted as the point $\tau=\infty$. If $b(\alpha)$ is
finite then of course the corresponding $u$ is in $N$ and
$w(\alpha,b(\alpha))<0$, i.e., $\alpha$ is strictly admissible. On
the other hand if $b(\alpha)=\infty$, $w$ has a zero in the
disconjugacy interval of $v$, and hence in the disconjugacy interval
of the differential equation satisfied by $w$ itself. This happens
because $w$ being a strict Sturm minorant of $v$ in $(0,\infty)$,
the disconjugacy interval of (\ref{3.16}) is bigger than that of the
differential equation satisfied by $v$. This means $w$ is unbounded.
Hence the corresponding $\alpha$ is strictly admissible.

It is helpful to first construct an auxiliary function
$\theta(\tau)$ and then use it to deduce that $v$ has the necessary
properties described above. In the Euclidean case [Kwo], the
auxiliary function $\theta(r)$ is given by:
$\displaystyle\theta(r)=-\frac{ru'(r)}{u(r)}$. For the hyperbolic
case we define the auxiliary function
 for all solutions
$u\in G\cup N$ as:
\begin{eqnarray}\label{3.21}
\theta(\tau)=\frac{-\sinh\tau u'(\tau)}{u(\tau)}
\end{eqnarray}
The auxiliary functions and the comparison functions in the
Euclidean and hyperbolic cases have different forms but similar
properties. Thus lemmas that follow are basically hyperbolic
analogues of lemmas proved by Kwong in the Euclidean case.

The function $\theta(\tau)$ is obviously continuous in $(0,\infty)$
for $u\in G$; for $u\in  N$ $\theta(\tau)$ is continuous in
$(0,b(\alpha))$ where $b(\alpha)$ is the zero of $u(\alpha)$.

\begin{lemma}
For solutions $u\in G\cup N$, $\theta(0)=0$ and
$\lim_{\tau\longrightarrow b(\alpha)}\theta(\tau)=\infty$. If $u\in
N$ $b(\alpha)$ is interpreted to be the zero of $u$ and if $u\in G$,
$b(\alpha)=\infty$.
\end{lemma}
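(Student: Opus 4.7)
The proof naturally splits into three limits. First, for $\theta(\tau)\to 0$ as $\tau\to 0^+$: since $u(0)=\alpha>0$ and $u'(0)=0$, balancing the singular term $\coth\tau\cdot u'$ in the ODE as $\tau\to 0^+$ forces $u''(0)=-f(\alpha)/2$, so $u'(\tau)=-\frac{f(\alpha)}{2}\tau+O(\tau^{3})$ while $\sinh\tau=\tau+O(\tau^{3})$; therefore
\[
\theta(\tau)=-\frac{\sinh\tau\,u'(\tau)}{u(\tau)}=\frac{f(\alpha)}{2\alpha}\tau^{2}+O(\tau^{4})\longrightarrow 0.
\]

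Next, for $u\in N$ with $b(\alpha)<\infty$: by Lemma $5.2$, $u'<0$ on $(0,b(\alpha))$, so $u'(b(\alpha))\leq 0$. I claim the inequality is strict. Indeed, if $u'(b(\alpha))=0$, then $(u(b(\alpha)),u'(b(\alpha)))=(0,0)$, and since $f(0)=0$ the zero function solves the ODE with these data at the regular point $\tau=b(\alpha)>0$; by standard uniqueness this forces $u\equiv 0$, contradicting $u(0)=\alpha>0$. Thus $u'(b(\alpha))<0$, and together with $u(\tau)\to 0^{+}$ and $\sinh b(\alpha)>0$ this gives $\theta(\tau)\to +\infty$.

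The main case is $u\in G$, where $b(\alpha)=\infty$. The plan is to bound $|u'|/u$ below by a positive constant for all sufficiently large $\tau$, for then $\theta\geq c\sinh\tau\to\infty$. I use the energy $E(\tau)=\frac{1}{2}(u')^{2}+V(u)$, where $V(u)=\frac{\tilde{a}u^{2+2/kq}}{2+2/kq}-\frac{\tilde{b}u^{2}}{2}$. Since $E'(\tau)=-\coth\tau\cdot(u')^{2}\leq 0$, $E$ is non-increasing. Because $u\in G$, $u(\tau)\to 0$; the standard energy argument already used in the proof of Lemma $5.2$ also forces $u'(\tau)\to 0$ (else $u$ would eventually become negative). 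Hence $V(u(\tau))\to 0$, so $E(\infty)=0$, and therefore $E(\tau)\geq 0$ for every $\tau$. Rearranging $E\geq 0$ yields
\[
(u')^{2}\;\geq\;\tilde{b}u^{2}-\frac{2\tilde{a}}{2+2/kq}u^{2+2/kq},
\]
and once $u(\tau)$ is small enough the right side dominates $\frac{1}{2}\tilde{b}u^{2}$. Since $u'<0$, this gives $-u'/u\geq\sqrt{\tilde{b}/2}$, so $\theta(\tau)\geq\sqrt{\tilde{b}/2}\cdot\sinh\tau\to\infty$.

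The delicate step is the $u\in G$ case: the conclusion $\theta\to\infty$ requires the exponential growth of $\sinh\tau$ to dominate the smallness of $|u'|$, and the energy argument is exactly what supplies the asymptotic lower bound $|u'|\geq c\,u$ that makes this work. The $\tau\to 0$ limit and the $u\in N$ case, by contrast, are essentially routine, resting on smoothness of $u$ at the origin and ODE uniqueness at $b(\alpha)$, respectively.
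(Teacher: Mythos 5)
Your proof is correct, and in the main case ($u\in G$) it takes a genuinely different route from the paper. The paper works with the Riccati variable $R=-u'/u$, which satisfies $R'=R^{2}-R\coth\tau+f(u)/u$, and argues that if $R$ ever dropped to $\sqrt{\tilde b/2}$ at large $\tau$ (where $f(u)/u$ is close to $-\tilde b$), then $R'$ would stay below $-\tilde b/2$ and force $R$ to become negative, contradicting $u'<0$; hence $R>\sqrt{\tilde b/2}$ eventually and $\theta\geq\sqrt{\tilde b/2}\,\sinh\tau\to\infty$. You instead exploit the monotonicity of the energy together with the fact (stated just before Lemma 5.2 in the paper) that $E(\infty)=0$ for $u\in G$: this gives $E(\tau)\geq 0$ pointwise, hence $(u')^{2}\geq\tilde b u^{2}-\frac{2\tilde a}{2+2/kq}u^{2+2/kq}\geq\frac{\tilde b}{2}u^{2}$ once $u$ is small, which is the same lower bound $-u'/u\geq\sqrt{\tilde b/2}$ obtained without any ODE comparison. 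Your version is arguably cleaner, since it replaces the paper's slightly informal ``$R'$ will remain strictly and hugely negative'' step by a one-line algebraic consequence of $E\geq 0$; the paper's Riccati argument has the minor advantage of not needing the a priori identification of $E(\infty)$. You also supply details the paper omits: the Taylor expansion $u'(\tau)=-\tfrac{f(\alpha)}{2}\tau+O(\tau^{3})$ behind $\theta(0)=0$, and the uniqueness argument showing $u'(b(\alpha))<0$ in the $u\in N$ case (the paper simply asserts $u'(b(\alpha))\neq 0$). No gaps.
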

\begin{proof}

The first claim is easy to verify since for all $u\in G\cup N$,
$u'(0)= 0$; since $u'(\tau)<0$, $\theta(\tau)> 0$ in $(0,\infty)$.

For $u\in N$, $u'(b(\alpha))\neq 0$ and the second assertion of the
lemma automatically follows.

Let us consider the case: $u\in G$.

Let $\displaystyle R=-\frac{u'}{u}$.

Then $R\geq 0$ and $\displaystyle
R'=-\frac{u''}{u}+\frac{u'^{2}}{u^{2}}=R^{2}-R\coth\tau
+\frac{f(u)}{u}$.

Now we know that $\displaystyle
\lim_{\tau\longrightarrow\infty}\frac{f(u)}{u}=-\tilde{b}$. We
assert that for large values of $\tau$ we would always have:
$\displaystyle R(\tau)> \sqrt{\frac{\tilde{b}}{2}}$. If not, then
$\displaystyle R(\tau)\leq \sqrt{\frac{\tilde{b}}{2}}$ for some
$\tau$. Then:

$\displaystyle R'(\tau)=R^{2}-\coth\tau
R+\frac{f(u)}{u}<R^{2}+\frac{f(u)}{u}\leq-\frac{\tilde{b}}{2}$.

Thus $R'$ will remain strictly and hugely negative eventually
causing $R$ to change sign.

Thus
$\displaystyle-\frac{u'(\tau)}{u(\tau)}>\sqrt{\frac{\tilde{b}}{2}}$
for large values of $\tau$. This in turn means
$\lim_{\tau\longrightarrow\infty}\theta(\tau)=\infty$.
\end{proof}
\vspace{.15in}

We next define the comparison function $v_{\beta}(\tau)=\sinh\tau u'
+\beta u$ (in the Euclidean case it is defined as
$v_{\beta}(r)=ru'(r)+\beta u(r)$) . It is readily seen that
$v_{\beta}(\tau)=(<,>)0$ if and only if $\theta$ intersects (is
above, is below) the straight line $y(\tau)=\beta$. Also,
$v_{\beta}(\tau)$ is tangent to the $\tau$-axis at some point
$\hat{\tau}$ if and only if $\theta(\tau)$ is tangent to the
straight line $y(\tau)=\beta$ at $\hat{\tau}$.

The function $v_{\beta}(\tau)$ satisfies the following differential
equation:

\begin{eqnarray}\label{3.22}
 v''+\coth\tau v'+f'(u)v=\Phi(\tau)=\beta(u f'(u)-f(u))-2\cosh\tau f(u)\\
 v(0)>0, \quad v'(0)= 0\mbox{\hspace{2in}}\nonumber
\end{eqnarray}

Now \begin{eqnarray*}
 \Phi &=&\beta\left(uf'(u)-f(u)\right)-2\cosh\tau f(u)\\
 &=&\frac{2}{kq}\beta\tilde{a}u^{1+\frac{2}{kq}}-2\cosh\tau f(u)
 \end{eqnarray*}

It is not really obvious that one can choose a $\beta$ such that
$\Phi$ has only one zero and the position of that zero has a
continuous dependence on $\beta$. However our next lemma proves that
this can indeed be achieved.
\begin{lemma}
There exists some $\bar{\beta}$ such that for $0<\beta<\bar{\beta}$
the function $\Phi(u,\tau)$ has only one zero, say at $\tau=\sigma$
in $(0,\infty)$ such that:
\begin{eqnarray*}
\Phi(u,\tau)&<& 0\qquad\mbox{for }\tau<\sigma\\
\Phi(u,\tau)&>& 0\qquad\mbox{for }\tau>\sigma
\end{eqnarray*}

The point $\sigma$ is a continuous monotone function of $\beta$.
\end{lemma}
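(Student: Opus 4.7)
The plan is to reduce the question to a perturbation of the easier case $\beta=0$. Using the identities $uf'(u)-f(u)=\frac{2\tilde a}{kq}u^{1+2/kq}$ and $f(u)=u(\tilde a u^{2/kq}-\tilde b)=:uh(u)$, I would factor
\begin{equation*}
\Phi(\tau)=2u(\tau)\,F(\tau),\qquad F(\tau):=\frac{\beta\tilde a}{kq}u^{2/kq}(\tau)-\cosh\tau\,h(u(\tau)).
\end{equation*}
Since $u(\tau)>0$ on the domain, zeros of $\Phi$ coincide with zeros of $F$. By Lemma~5.2, $u$ is strictly decreasing, and $h$ is strictly increasing in its argument, so $h(u(\tau))$ strictly decreases from $h(\alpha)>0$ to $-\tilde b<0$, vanishing at the unique point $\tau_1$ where $u(\tau_1)=\xi_1$. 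In particular $h(u(\tau))\le 0$ for $\tau\ge\tau_1$, so $F(\tau)>0$ on $[\tau_1,b(\alpha))$ for every $\beta>0$; consequently all zeros of $\Phi$ are confined to the bounded interval $(0,\tau_1)$.

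At $\beta=0$ the function $F(\tau;0)=-\cosh\tau\,h(u(\tau))$ is strictly negative on $(0,\tau_1)$, vanishes at $\tau_1$, and
\begin{equation*}
\partial_\tau F(\tau_1;0)=-\cosh\tau_1\,h'(\xi_1)\,u'(\tau_1)>0,
\end{equation*}
because $h'(\xi_1)=\tfrac{2\tilde a}{kq}\xi_1^{2/kq-1}>0$ and $u'(\tau_1)<0$ by Lemma~5.2. Hence $\tau_1$ is a simple zero of $F(\cdot;0)$; also, since $\alpha>\xi_0>\xi_1$ by Lemma~5.1, we have $F(0;0)=-h(\alpha)<0$.

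Now I would apply the implicit function theorem at $(\tau_1,0)$ to obtain $\delta>0$ and $\bar\beta_0>0$ such that, for each $\beta\in(0,\bar\beta_0)$, $F(\cdot;\beta)=0$ has a unique root $\sigma(\beta)\in(\tau_1-\delta,\tau_1+\delta)$, with $\sigma$ a $C^1$ function of $\beta$. Joint continuity of $F$ and the strict negativity of $F(\cdot;0)$ on the compact interval $[0,\tau_1-\delta]$ yield, after possibly shrinking $\bar\beta_0$ to some $\bar\beta\in(0,\bar\beta_0]$, that $F(\tau;\beta)<0$ on $[0,\tau_1-\delta]$ for every $\beta\in(0,\bar\beta)$. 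Combined with $F>0$ on $[\tau_1+\delta,b(\alpha))$, this shows $\sigma(\beta)$ is the only zero of $F(\cdot;\beta)$ on the entire domain and that the asserted sign pattern holds. The implicit function theorem further gives
\begin{equation*}
\sigma'(\beta)=-\frac{\partial_\beta F(\sigma;\beta)}{\partial_\tau F(\sigma;\beta)}=-\frac{\tilde a\,u^{2/kq}(\sigma)/(kq)}{\partial_\tau F(\sigma;\beta)}<0,
\end{equation*}
since the denominator is positive at the unique sign-changing zero; so $\sigma$ is continuous and strictly monotone (decreasing) in $\beta$.

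The main technical obstacle is securing this uniform sign control of $F$ on the compact piece $[0,\tau_1-\delta]$ as $\beta\to 0^+$; the unbounded piece $[\tau_1,b(\alpha))$ raises no compactness issue because positivity of $F$ there is an unconditional consequence of $h(u(\tau))\le 0$, regardless of $\beta$.
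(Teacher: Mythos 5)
Your factorization $\Phi=2uF$ with $F=\frac{\beta\tilde a}{kq}u^{2/kq}-\cosh\tau\,h(u)$ is clean and correct, and the perturbative argument (sign of $F(\cdot;0)$ on $(0,\tau_1)$, simplicity of the zero at $\tau_1$, implicit function theorem, uniform negativity on the compact piece) does establish the lemma as literally worded, for \emph{some} small $\bar\beta$. But this is a genuinely different route from the paper's, and the difference matters: the paper's proof produces the explicit threshold $\bar\beta=kq\bigl[1-\tilde b/(\tilde a\,\alpha^{2/kq})\bigr]$, characterized by $\Phi(0)<0$, and shows that for \emph{every} $\beta$ in $(0,\bar\beta)$ the zero is unique, with $\sigma_\beta$ decreasing continuously from $\tau_1$ (at $\beta=0$) all the way down to $0$ (at $\beta=\bar\beta$). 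This full sweep is exactly what is used right after Lemma 5.7: there, $\rho_\beta$ increases continuously from $0$ while $\sigma_\beta$ decreases from $\tau_1$ to $0$, and the intermediate value theorem yields a $\beta_0$ with $\rho_{\beta_0}=\sigma_{\beta_0}$. With your unspecified small $\bar\beta$, $\sigma_\beta$ never leaves a neighborhood of $\tau_1$ while $\rho_\beta$ stays near $0$, so no crossing can be extracted; the lemma as you have proved it is too weak to run the rest of Section 5.

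The missing ingredient is a global (not perturbative) uniqueness argument for the zero of $\Phi$ valid up to the explicit $\bar\beta$. The paper gets this by showing that at any zero of $\Phi$ where $\Phi'>0$ the inequality $\frac{2\tilde b}{kq}(-\sinh\tau\,u')>\sinh\tau\tanh\tau\,f(u)$ holds, and then using the equation in the form $(-\sinh\tau\,u')'=\sinh\tau f(u)$ to show the left side grows strictly faster than the right on $(0,\tau_1)$; hence once $\Phi$ crosses zero upward it cannot return, and a single sign change follows whenever $\Phi(0)<0$. If you want to salvage your framing, you would need an analogous global monotonicity statement for $F$ (e.g.\ that $F$ is increasing wherever it vanishes, for all $\beta<\bar\beta$), not just local nondegeneracy at $(\tau_1,0)$. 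Your computation of $\sigma'(\beta)<0$ via the implicit function theorem is fine and is essentially the paper's own monotonicity argument in disguise, but it too must be propagated over the whole interval $(0,\bar\beta)$ rather than a small neighborhood of $\beta=0$.
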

\begin{proof}
First, we note that $\Phi(\tau)>0$ in $[\tau_{1},\infty)$ by
definition; so its zeros must be concentrated in $(0,\tau_{1})$. At
a zero of the function $\Phi$ we have:
\begin{eqnarray*}
 \frac{2}{kq}\beta\tilde{a}u^{1+\frac{2}{kq}}&&=2\cosh\tau f(u)
\end{eqnarray*}
 Thus at $\Phi=0$ we have:
\begin{eqnarray*}
 \Phi'&=&\beta\frac{2}{kq}\tilde{a}\left(1+\frac{2}{kq}\right)u^{\frac{2}{kq}}u'-2\sinh\tau f(u)-2\cosh\tau f'(u)u'\\
 &=&\frac{2u'\cosh\tau}{u}\left[\left(1+\frac{2}{kq}\right)f(u)-uf'(u)\right]-2\sinh\tau f(u)
 \end{eqnarray*}
So, if at $\Phi=0$, $\Phi'>0$, then:
\begin{eqnarray*}
 \frac{2u'\cosh\tau}{u}\left[\left(1+\frac{2}{kq}\right)f(u)-uf'(u)\right]> 2\sinh\tau f(u)\\
\mbox{or ,}\quad
-\frac{2\tilde{b}}{kq}u'>\tanh\tau f(u)
\end{eqnarray*}
which in turn implies
\begin{eqnarray}\label{3.23}
\frac{2\tilde{b}}{kq}(-\sinh\tau u')>\sinh\tau\tanh\tau f(u)
\end{eqnarray}
Similarly if $\Phi'<0$ at $\Phi=0$, then:
\begin{eqnarray}\label{3.24}
\frac{2\tilde{b}}{kq}(-\sinh\tau u')<\sinh\tau\tanh\tau f(u)
\end{eqnarray}

Now the differential equation (\ref{3.15}) satisfied by $u$ can be
rewritten as:
\begin{eqnarray*}
(-\sinh\tau u')'=\sinh\tau f(u)
\end{eqnarray*}
If at the first zero of the function $\Phi(\tau)$, $\Phi'(\tau)>0$
then inequality (\ref{3.23}) holds at that point and we also know
that the left hand side of the inequality is positive and increasing
at the rate $\displaystyle\left(\frac{2\tilde{b}}{kq}(-\sinh\tau
u')\right)'= \frac{2\tilde{b}}{kq}\sinh\tau f(u)$. As for the right
hand side, we have, in the interval $(0,\tau_{1})$:
\begin{eqnarray*}
(\sinh\tau\tanh\tau f(u))'&=&\sinh\tau
f(u)+\sinh\tau\mbox{sech}^{2}\tau f(u)+\sinh\tau\tanh\tau
f'(u)u'\\&<& 2\sinh\tau f(u)
\end{eqnarray*}
The inequality above holds because $f'(u)>0$ in $(0,\tau_{1})$ and
$u'<0$. Since in our case
$\displaystyle\frac{2\tilde{b}}{kq}=2(kq-2)$ and $k$ is chosen so
that $kq> 1$, it turns out that
$\displaystyle\frac{2\tilde{b}}{kq}\sinh\tau f(u)> 2\sinh\tau f(u)$.
This in turn implies that the left hand side of (\ref{3.23})
increases more rapidly than the right hand side. So if inequality
(\ref{3.23}) holds at some point in $(0,\tau_{1})$ then it prevails
at all subsequent points in this interval. We can thus conclude that
if $\Phi(0)<0$, then $\Phi(\tau)$ can have only one zero in
$(0,\tau_{1})$.

Now for a particular solution having initial value $\alpha$
,$\Phi(\tau=0)=\beta\left(\alpha
f'(\alpha)-f(\alpha)\right)-2f(\alpha)$. Putting in the specific
form of $f(u)$ we obtain the condition that $\Phi(\tau)$ has a
negative initial value:
\begin{eqnarray*}
\beta< kq\left[1-\frac{\tilde{b}}{\tilde{a}\alpha^{2/kq}}\right]
\end{eqnarray*}
We let $\bar{\beta}$ denote the upper limit set on $\beta$ by the
condition above. Then for $\beta\in(0,\bar{\beta})$, the function
$\Phi(\tau)$ has a negative initial value and consequently only one
zero in $(0,\tau_{1})$. We denote that zero by $\sigma$.

Let us now find out how $\sigma$ depends on $\beta$. We have:
\begin{eqnarray*}
\frac{\tilde{a}\beta}{kq}=
\cosh\sigma\frac{f(u(\sigma))}{u(\sigma)^{1+2/kq}}
\end{eqnarray*}
Evidently then $\beta$ depends continuously on $\sigma$. Also:
\begin{eqnarray*}
\beta'(\sigma)
&=&\frac{kq}{\tilde{a}}u^{-1-2/kq}\left[\frac{2\tilde{b}}{kq}u'(\sigma)\cosh\sigma+f(u)\sinh\sigma\right]
\end{eqnarray*}
Now for $\beta\in(0,\bar{\beta})$, (\ref{3.23}) holds at $\sigma$,
as proved before. Thus $\displaystyle
\left[\frac{2\tilde{b}}{kq}u'(\sigma)\cosh\sigma+f(u)\sinh\sigma\right]<0$,
and hence $\displaystyle \beta'(\sigma)<0$ for all $\beta$ in this
range. This means there exists a continuous inverse function in a
neighborhood of $\beta(\sigma)$. Thus $\sigma$ depends continuously
on $\beta$. In fact $\sigma$ is a decreasing function of $\beta$.
When $\beta=0$ the only zero of $\Phi(\tau)$ is at $\tau_{1}$. As we
increase $\beta$ the zero shifts continuously to the left.
\end{proof}
Let $\rho_{\beta}$ be the first zero of $v_{\beta}(\tau)$ (we do not
yet know how many zeros $v$ can have). Then for $\beta=0$, $\rho=0$.
As we increase $\beta$, $\rho_{\beta}$ moves to the right. In order
to prove that we can control $\beta$ such that $\rho_{\beta}$ and
$\sigma_{\beta}$ can be made to coincide, we need to show that
$\rho_{\beta}$ continuously depends on $\beta$. We first show that
actually, given any $\beta$, $v_{\beta}(\tau)$ can have only one
zero and then prove the continuous dependence of that zero on the
parameter $\beta$.

\begin{lemma}
The function $v_{\beta}(\tau)$ has only one zero in $(0,\infty)$.
\end{lemma}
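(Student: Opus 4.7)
My plan is to prove the lemma by establishing the strictly stronger statement that the auxiliary function $\theta(\tau)$ of (5.10) is strictly increasing on $(0,b(\alpha))$. Since $v_{\beta}(\tau)=\sinh\tau\,u'(\tau)+\beta u(\tau)=u(\tau)(\beta-\theta(\tau))$ and $u>0$ on this interval, zeros of $v_{\beta}$ coincide exactly with preimages of $\beta$ under $\theta$. Lemma 5.5 already gives $\theta(0)=0$ and $\theta\to\infty$ as $\tau\to b(\alpha)$, so monotonicity of $\theta$ would immediately produce exactly one such preimage for every $\beta>0$. A pleasant feature of this route is that the conclusion comes out uniformly in $\beta$, which should also streamline the continuity-in-$\beta$ discussion that follows.

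First I would differentiate $\theta=-\sinh\tau\,u'/u$ and use the radial ODE (5.4) for $u$ to rewrite
\[
\theta'(\tau)=\sinh\tau\left[\frac{f(u)}{u}+\left(\frac{u'}{u}\right)^{2}\right].
\]
On $(0,\tau_{1}]$ both summands are non-negative because $f(u)\ge 0$ there, so $\theta'>0$ on that interval trivially. The substantive task is to rule out a zero of $\theta'$ at some point $\tau^{*}>\tau_{1}$. At such a point necessarily $(u'(\tau^{*}))^{2}=-u(\tau^{*})f(u(\tau^{*}))$, and the key idea is to feed this identity into the energy
\[
E(\tau)=\tfrac{1}{2}(u'(\tau))^{2}+F(u(\tau)),\qquad F(u)=\int_{0}^{u}f(s)\,ds,
\]
which was already exploited in Lemmas 5.1 and 5.2 via $E'(\tau)=-\coth\tau\,(u'(\tau))^{2}\le 0$.

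A short computation with the explicit $f(u)=\tilde{a}u^{1+2/(kq)}-\tilde{b}u$ shows that the $\tilde{b}$-terms cancel and one is left with
\[
E(\tau^{*})=-\frac{\tilde{a}\,u(\tau^{*})^{2+2/(kq)}}{2(kq+1)}<0.
\]
On the other hand, for $\alpha\in G$ the decay $u,u'\to 0$ forces $E(\tau)\to 0$, while for $\alpha\in N$ one has $E(b(\alpha))=\tfrac{1}{2}(u'(b(\alpha)))^{2}>0$; combined with the monotonicity of $E$, this forces $E\ge 0$ throughout $(0,b(\alpha))$, contradicting the displayed sign. Hence $\theta'$ never vanishes, $\theta$ is strictly monotone increasing, and $v_{\beta}$ has exactly one zero.

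I expect the main obstacle to be recognising $E$ as the right quantity to track and noticing the algebraic coincidence that at a critical point of $\theta$ the energy collapses to a single term of definite sign; without that cancellation no contradiction could be extracted. The requirement that $E\ge 0$ on $(0,b(\alpha))$ is precisely where the ground-state assumption $\alpha\in G\cup N$ enters, and the same argument would genuinely fail for $\alpha\in P$, which reassures me that this is the correct mechanism.
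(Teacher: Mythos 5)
Your proof is correct, and it establishes the lemma by a genuinely different mechanism than the paper. Both arguments amount to showing that $\theta(\tau)=-\sinh\tau\,u'(\tau)/u(\tau)$ is monotone, and both treat $(0,\tau_1]$ the same way (there $f(u)\ge 0$, so $-\sinh\tau\,u'$ is non-decreasing while $u$ decreases). The divergence is on $(\tau_1,b(\alpha))$: the paper argues by contradiction, locating the lowest local minimum $\omega$ of $\theta$ at height $\beta_0$, observing that $v_{\beta_0}$ then has a double zero at $\omega$ while satisfying the differential inequality $v''+\coth\tau\,v'+f'(u)v\ge 0$ beyond $\omega$, and invoking the no-double-zero principle from [Kwo]. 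You instead differentiate $\theta$ directly, obtaining $\theta'=\sinh\tau\,[\,f(u)/u+(u'/u)^2\,]$, and note that a critical point $\tau^*$ forces $(u'(\tau^*))^2=-u(\tau^*)f(u(\tau^*))$, whereupon the energy collapses to $E(\tau^*)=-\tilde{a}\,u(\tau^*)^{2+2/(kq)}/(2(kq+1))<0$ (I checked: the $\tilde{b}$-terms cancel and $\tilde{a}>0$ since $\gamma>0$ and $kq>2$). This contradicts $E\ge 0$ on $(0,b(\alpha))$, which follows from $E'\le 0$ together with the boundary behavior the paper records when it introduces $E$, namely $E(\infty)=0$ for $\alpha\in G$ and $E(b(\alpha))=\tfrac12 u'(b(\alpha))^2>0$ for $\alpha\in N$. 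Your route is more self-contained (no appeal to the double-zero lemma, and no need to justify that $v_{\beta_0}\le 0$ on all of $(\omega,\infty)$), and it yields the slightly stronger conclusion that $\theta$ is \emph{strictly} increasing, uniformly in $\beta$, which indeed simplifies the subsequent continuity-in-$\beta$ discussion. The only caveat, shared with the paper's own proof, is that for $u\in N$ the argument controls $v_\beta$ only on $(0,b(\alpha))$ rather than on all of $(0,\infty)$; that is all that is used in the proof of Theorem 5.1.
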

\begin{proof}
In the interval $[0,\tau_{1}]$,
\begin{eqnarray*}
(-\sinh\tau u'(\tau))'=f(u)\sinh\tau\geq 0
\end{eqnarray*}
Thus $(-\sinh\tau u'(\tau))$ is non-decreasing in $[0,\tau_{1}]$.
Since $u(\tau)$ is decreasing,
$\displaystyle\theta(\tau)=\frac{-\sinh\tau u'(\tau)}{u(\tau)}$ is
non-decreasing in $[0,\tau_{1}]$. Thus for any $\beta$ it can
intersect the straight line $y(\tau)=\beta$ no more than once in
this interval and  the corresponding $v_{\beta}(\tau)$ can have at
most one zero.

Since $\lim_{\tau\longrightarrow\infty}\theta(\tau)=\infty$, if
$\theta(\tau)$ is not non-decreasing in the entire interval
$(\tau_{1},\infty)$, then it has to have local minima. Suppose the
lowest of all such minima occurs  at $\omega$ and has height
$\beta_{0}$. Then in $(\omega,\infty)$, $v_{\beta_{0}}(\tau)$ is
negative and has a double zero at $\omega$. Also
$v_{\beta_{0}}(\tau)$ satisfies the following differential
inequality in $(\omega,\infty)$:
\begin{eqnarray*}
v''+\coth\tau v' +f'(u)v\geq 0
\end{eqnarray*}
But this is impossible (since, if $v$ satisfies the second-order
differential equation above, then it cannot have a double zero; cf.
lemma 5, [Kwo]).

Thus we conclude that $\theta(\tau)$ is non-decreasing in
$(0,\infty)$, which in turn implies that for any value of $\beta$,
$v_{\beta}(\tau)$ can have only one zero in $(0,\infty)$.
\end{proof}

To prove that one can choose $\beta$ such that
$\rho_{\beta}=\sigma_{\beta}$ it is sufficient to show that
$\rho_{\beta}$ as a function of $\beta$ doesn't have any
discontinuity in $(0,\tau_{1})$. Since $v_{\beta}$ has a zero at
$\rho_{\beta}$ if and only if $\theta$ intersects the straight line
$y(\tau)=\beta$ at $\tau=\rho_{\beta}$, we just need to show
$\theta'(\rho_{\beta})\neq 0$. As shown in the preceding lemma,
$\theta'(\tau)>0$ in $(0,\tau_{1})$. As we increase $\beta$, the
height of the horizontal straight line $y(\tau)=\beta$ increases.
This results in a continuous shift of the point of intersection
$\rho_{\beta}$ to the right. Thus we can conclude that in
$(0,\tau_{1})$ $\rho_{\beta}$ is a continuous increasing function of
$\beta$. For $\beta=0$, $\rho=0$ and $\sigma=\tau_{1}$. When we
increase $\beta$, $\rho_{\beta}$ moves continuously to the right
even as $\sigma_{\beta}$ shifts continuously to the left till it is
at the origin $\tau=0$ for $\beta=\bar{\beta}$, as shown before. It
follows that there exists a $\beta_{0}\in(0,\bar{\beta})$ for which
we would have: $\rho_{\beta_{0}}=\sigma_{\beta_{0}}$. Let us then
fix the parameter $\beta$ by choosing that value $\beta_{0}$.

We are now in a position to prove Theorem $5.1$.
\begin{proof}

Let us use $v_{\beta_{0}}(\tau)$ as a comparison function for
$w(\tau)$. The
 differential equations to be compared are:
\begin{eqnarray*}
w''+\coth\tau w'+f'(u)w=0\\
w(0)=1, \qquad w'(0)=0
\end{eqnarray*}
and
\begin{eqnarray*}
 v''+\coth\tau v'+\left[f'(u)-\frac{\Phi(\tau)}{v}\right]v=0\\
 v(0)>0, \qquad  v'(0)=0
\end{eqnarray*}
Since in $(0,\rho)$, $\Phi<0$ and $v>0$ the coefficient of $v$ is
larger than that of $w$. Thus $v$ is a strict Sturm majorant of $w$
and its zero $\rho$ occurs before the first zero of $w$, say $c$.
But at $c$, $\Phi>0$ and $v<0$, thus the coefficient of $v$ is still
larger than that of $w$. Moreover, since $w(c)=0$,
$\displaystyle\frac{w'(c)}{w(c)}= +\infty$ and
$\displaystyle\frac{w'(c)}{w(c)}>\frac{v'(c)}{v(c)}$. Thus $v$ again
is a strict Sturm majorant of $w$. But $v$ does not have a zero in
$[c,\infty)$. Then $w$ cannot have a zero in this interval either.
So if $u\in N$ then $w(b(\alpha))<0$ and $\alpha$ is strictly
admissible. Let us consider the case $u\in G$ now. Evidently, $c$
belongs to the disconjugacy interval of (\ref{3.22}). Since $v$ is a
strict Sturm majorant of $w$ in $(0,\infty)$, the disconjugacy
interval of (\ref{3.18}) is a superset of the disconjugacy interval
of (\ref{3.22}). Thus $w$ has a zero in the disconjugacy interval of
the differential equation it satisfies. Hence it must be unbounded.

Thus for $u\in G\cup N$ the corresponding initial value is strictly
admissible.
\end{proof}
As shown before, the strict admissibility of an initial value in $G$
guarantees the uniqueness of the corresponding solution.

\begin{center}
    {\large\bf References}
\end{center}

\begin{tabular}{rp{5.5in}}

  \noindent [Bar] &   Bargmann, V.: Irreducible Unitary Representations of
the Lorentz Group, Annals of Mathematics, vol. 48, no. 3, 568-640,
  1947\\

 \noindent [Bec1] &   Beckner, W.: Sharp Inequalities and Geometric
Manifolds, The Journal of Fourier Analysis and Applications, vol. 3,
Special Issue, 825-836,
 1997\\

 \noindent [Bec2] &   Beckner, W.: Geometric Asymptotics and the
Logarithmic Sobolev Inequality, Forum Mathematicum 11, 105-137,
 1999\\

  \noindent [Ber] &   Berestycki, H., Lions, P.L.  and Peletier, L.A.: An
ODE approach to the existence of positive solutions for semilinear
problems in $R^{n}$, Indiana University Mathematics Journal 30,
141-167,
  1981\\

  \noindent [Bod] &   Bodmann, B.G.: A Lower Bound for the Wehrl Entropy
of Quantum Spin with Sharp High-Spin Asymptotics, Communications in
Mathematical Physics 250, 287-300,
  2004\\

  \noindent [Car] &   Carlen,  E.A.: Some Integral Identities and
Inequalities for Entire Functions and Their Application to the
Coherent State Transform, Journal of Functional Analysis, vol. 97,
no. 1,231-249,
  1991\\

  \noindent [Gro] &   Gross,  L.: Logarithmic Sobolev Inequalities,
American Journal of Mathematics 97, 1061-1083,
  1975\\

  \noindent [Heb] &   Hebey, E.: Nonlinear Analysis on Manifolds: Sobolev
Spaces and Inequalities, CIMS Lecture Notes, 1999, Courant Institute
of Mathematical
  Sciences\\

  \noindent [Inc] &   Ince, E.L.: Ordinary Differential Equations, Dover
Publications,
  1944\\

  \noindent [Kwo] &   Kwong, M.K.: Uniqueness of Positive Radial Solutions
of  $\triangle u-u+u^{p}=0$ in $\mathbb{R}^{n}$, Archive for
Rational Mechanics and Analysis 105, 243-266,
  1989\\

  \noindent [Lie] &   Lieb, E.H.: Proof of an Entropy Conjecture of Wehrl,
Communications in Mathematical Physics 62, 35-41,
  1978\\

  \noindent [Per] &   Perelomov, A.: Generalized Coherent States and Their
Applications, Texts and Monographs in Physics,
  Springer-Verlag\\

   \noindent [Rot] &   Rothaus, O.: Diffusion on Compact Riemannian
Manifolds and Logarithmic Sobolev Inequalities, Journal of
Functional Analysis 42, 358-367,
   1981\\

  \noindent [Sch] &   Schupp,  P.: On Lieb's Conjecture for the Wehrl
Entropy of Bloch Coherent States, Communications in Mathematical
Physics 207, no. 2, 481-493,
  1999\\

  \noindent [Weh] &   Wehrl, A.: On the Relation between Classical and
Quantum-mechanical Entropy, Reports on Mathematical Physics, vol.
16, no.3, 353-358,
  1979\\
\end{tabular}

\end{document}